\documentclass[11pt,a4paper]{article}

\usepackage[inline]{enumitem}
\usepackage{setspace}
\setcounter{secnumdepth}{2}
\usepackage{amssymb,amsthm,amsmath,mathtools}
\usepackage{thmtools,thm-restate}
\usepackage[round]{natbib}
\usepackage{pifont}

\usepackage[usenames,svgnames,xcdraw,table]{xcolor}
\definecolor{DarkGreen}{rgb}{0.1,0.5,0.1}
\usepackage[backref=page]{hyperref}
\hypersetup{
	colorlinks=true,
	linkcolor=red,
	urlcolor=DarkGreen,
	citecolor=blue
}
\renewcommand*{\backref}[1]{}
\renewcommand*{\backrefalt}[4]{
    \ifcase #1 (Not cited.)
    \or        (Cited on page~#2)
    \else      (Cited on pages~#2)
    \fi}
\usepackage[capitalise,noabbrev]{cleveref}
\Crefname{property}{Property}{Properties}
\Crefname{example}{Example}{Examples}
\Crefname{table}{Table}{Tables}
\usepackage{nicefrac}
\usepackage{enumerate}
\usepackage{etoolbox}
\usepackage[margin=1in]{geometry}
\usepackage[T1]{fontenc}
\usepackage[tt=false]{libertine}
\usepackage{url}
\usepackage[utf8]{inputenc}
\usepackage[small]{caption}
\usepackage{booktabs}
\urlstyle{same}
\usepackage{mathrsfs,bm,amsfonts,dsfont}
\usepackage{authblk}
\usepackage{array,multirow,graphicx,bigdelim}
\newcolumntype{P}[1]{>{\centering\arraybackslash\hspace{0pt}}p{#1}}
\usepackage{float}

\makeatletter
\renewcommand{\paragraph}{
  \@startsection{paragraph}{4}
  {\z@}{1.0ex \@plus 1ex \@minus .2ex}{-1em}
  {\normalfont\normalsize\bfseries}
}
\let\oldnl\nl
\newcommand{\nonl}{\renewcommand{\nl}{\let\nl\oldnl}}
\makeatother

\usepackage[labelfont={normalfont,bf},textfont=it]{caption}
\usepackage{subcaption}
\usepackage{upgreek}

\newtheorem{definition}{Definition}
\newtheorem{remark}{Remark}

\newtheorem{prop}{Proposition}
\crefalias{prop}{proposition}
\usepackage{relsize}
\usepackage{mdframed}

\newcommand{\A}{\mathcal A}
\renewcommand{\>}{\succ}
\newcommand{\aff}{\mathrm{\textup{aff}}}
\newcommand{\cost}{\textup{\texttt{time}}}

\newcommand{\D}{\mathcal{D}}

\newcommand{\dkt}{\mathrm{d_\textup{kt}}}

\newcommand{\E}{\mathbb{E}}
\newcommand{\Ind}{\mathbf{1}}

\newcommand{\Kem}{\textsc{Kemeny}}
\newcommand{\kPL}{\textup{$k$-PL}}
\newcommand{\kMM}{\textup{$k$-MM}}

\newcommand{\ml}{\mathcal L}

\newcommand{\lin}{\mathrm{\textup{lin}}}

\newcommand{\MM}{\textup{MM}}
\newcommand{\mytheta}{\uptheta}
\newcommand{\N}{\mathbb{N}}

\newcommand{\NP}{\textup{NP}}
\newcommand{\NPc}{\textup{NP-c}}
\newcommand{\NPC}{\textup{NP-complete}}
\newcommand{\NPH}{\textup{NP-hard}}
\newcommand{\Otilde}{\tilde{\mathcal{O}}}
\renewcommand{\P}{\textup{Poly}}
\newcommand{\PL}{\textup{PL}}
\newcommand{\poly}{\textup{poly}}

\newcommand{\PTAS}{\textup{PTAS}}
\newcommand{\Q}{\mathbb{Q}}
\newcommand{\Rec}{$(\D,\w)$\textsc{-Recommendation}}
\newcommand{\rank}{\textup{rank}}

\newcommand{\si}{\text{\#moves}}

\newcommand{\Unif}{\textup{Unif}}
\newcommand{\w}{\mathbf{w}}
\newcommand{\waff}{\w_\aff}
\newcommand{\wlin}{\w_\lin}

\newcommand{\WFAST}{\textsc{Weighted Feedback Arc Set in Tournaments}}
\newcommand{\wfast}{\textsc{WFAST}}
\newcommand{\Z}{\mathbb{Z}}
\newcommand{\R}{\mathbb{R}}

\title{Minimizing Time-to-Rank:\\[0.2em]\smaller{} A Learning and Recommendation Approach}

\author[a]{Haoming Li}
\author[b]{Sujoy Sikdar}
\author[c]{Rohit Vaish}
\author[d]{Junming Wang}
\author[e]{Lirong Xia}
\author[f]{Chaonan Ye}
\affil[a]{Duke University\\
	{\small\texttt{haoming.li@duke.edu}}}
\affil[b]{Rensselaer Polytechnic Institute\\
	{\small\texttt{sikdas@rpi.edu}}}
\affil[c]{Rensselaer Polytechnic Institute\\ 
	{\small\texttt{vaishr2@rpi.edu}}}
\affil[d]{Rensselaer Polytechnic Institute\\ 
	{\small\texttt{wangj33@rpi.edu}}}
\affil[e]{Rensselaer Polytechnic Institute\\
	{\small\texttt{xial@cs.rpi.edu}}}
\affil[f]{Stanford University\\ 
	{\small\texttt{canonyeee@gmail.com}}}

\begin{document}

\maketitle

\begin{abstract}
Consider the following problem faced by an online voting platform: A user is provided with a list of alternatives, and is asked to rank them in order of preference using only drag-and-drop operations. The platform's goal is to recommend an initial ranking that minimizes the time spent by the user in arriving at her desired ranking. We develop the first optimization framework to address this problem, and make theoretical as well as practical contributions. On the practical side, our experiments on Amazon Mechanical Turk provide two interesting insights about user behavior: First, that users' ranking strategies closely resemble \emph{selection} or \emph{insertion} sort, and second, that the time taken for a drag-and-drop operation depends \emph{linearly} on the number of positions moved. These insights directly motivate our theoretical model of the optimization problem. We show that computing an optimal recommendation is \NPH{}, and provide exact and approximation algorithms for a variety of special cases of the problem. Experimental evaluation on MTurk shows that, compared to a random recommendation strategy, the proposed approach reduces the (average) time-to-rank by up to $50\%$.
\end{abstract}

\section{Introduction}
\label{sec:Introduction}

Eliciting preferences in the form of rankings over a set of alternatives is a common task in social choice, crowdsourcing, and in daily life. For example, the organizer of a meeting might ask the participants to rank a set of time-slots based on their individual schedules. Likewise, in an election, voters might be required to rank a set of candidates in order of preference.

Over the years, computerized systems have been increasingly used in carrying out preference elicitation tasks such as the ones mentioned above. Indeed, recently there has been a proliferation of online voting platforms such as CIVS, OPRA, Pnyx, RoboVote, and Whale$^4$.\footnote{CIVS (\url{https://civs.cs.cornell.edu/}), OPRA (\url{opra.io}), Pnyx (\url{https://pnyx.dss.in.tum.de/}), RoboVote (\url{http://robovote.org/}), Whale$^4$(\url{https://whale.imag.fr/}).} In many of these platforms, a user is presented with an arbitrarily ordered list of alternatives, and is asked to shuffle them around in-place using \emph{drag-and-drop} operations until her desired preference ordering is achieved. \Cref{fig:ui} illustrates the use of drag-and-drop operations in sorting a given list of numbers.

\begin{figure}[tp]
	\centering
	\includegraphics[width=0.8\linewidth]{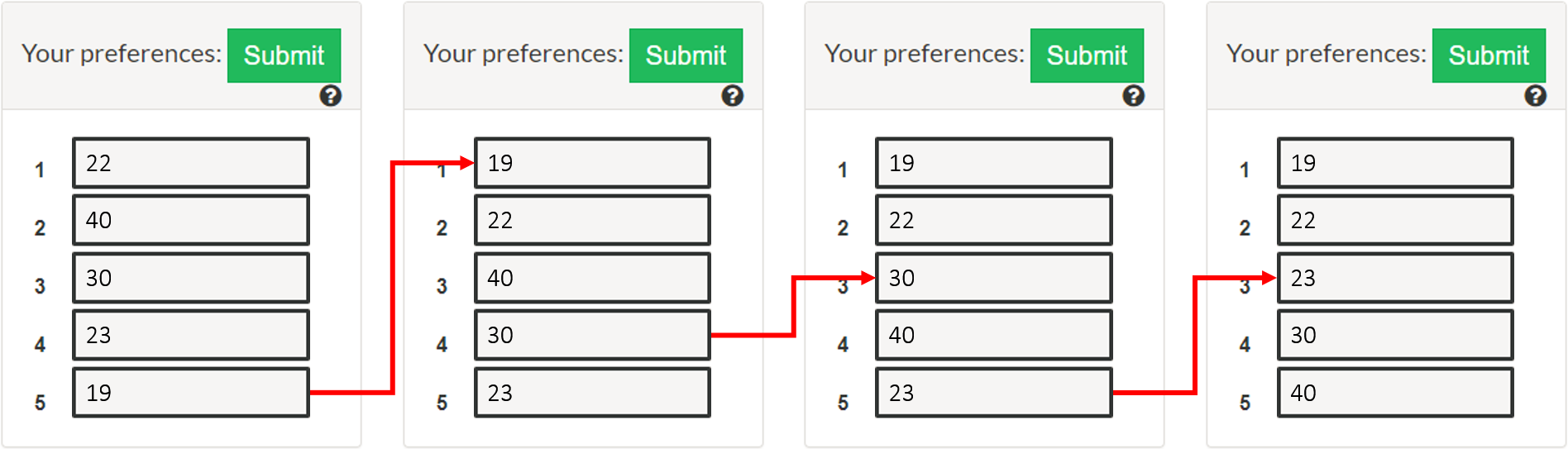}
	\caption{\label{fig:ui} \small Sorting via drag-and-drop operations.}
\end{figure}

Our focus in this work is on \emph{time-to-rank}, i.e., the time it takes for a user to arrive at her desired ranking, starting from a ranking suggested by the platform and using only drag-and-drop operations. We study this problem from the perspective of the voting platform that wants to \emph{recommend} an optimal initial ranking to the user (i.e., one that \emph{minimizes} time-to-rank). Time to accomplish a designated task is widely considered as a key consideration in the usability of automated systems~\citep{Bevan15:ISO,Albert13:Measuring}, and serves as a proxy for user effort. Indeed, `time on task' was identified as a key factor in the usability and efficiency of computerized voting systems in a 2004 report by NIST to the U.S. Congress for the Help America Vote Act (HAVA)~\citep{Laskowski04:Improving}. In crowdsourcing, too, {time on task} plays a key role in the recruitment of workers, quality of worker participation, and in determining payments~\citep{Cheng15:Measuring,Maddalena16:Crowdsourcing}.

Note that the initial ranking suggested by the platform can have a significant impact on the time spent by the user on the ranking task. Indeed, if the user's preferences are known beforehand, then the platform can simply recommended it to her and she will only need to verify that the ordering is correct. In practice, however, users' preferences are often \emph{unknown}. Furthermore, users employ a wide variety of \emph{ranking strategies}, and based on their proficiency with the interface, users can have very different \emph{drag-and-drop times}. All these factors make the task of predicting the time-to-rank and finding an optimal recommendation challenging and non-trivial.

We emphasize the subtle difference between our problem and that of \emph{preference elicitation}. The latter involves repeatedly asking questions to the users (e.g., in the form of pairwise comparisons between alternatives) to gather enough information about their preferences. By contrast, our problem involves a one-shot recommendation followed by a series of drag-and-drop operations by the user until her desired ranking is achieved. There is an extensive literature on preference eliciation~\citep{Conen01:Minimal,Conitzer02:Elicitation,Blum04:Preference,Boutilier13:Computational,Busa-Fekete2014:Preference-based,Azari13:Preference,Zhao2018:A-Cost-Effective}. Yet, somewhat surprisingly, the problem of recommending a ranking that minimizes users' time and effort has received little attention. Our work aims to address this gap.


\paragraph{Our Contributions} We make contributions on three fronts:
\begin{itemize}
    \item On the \emph{conceptual} side, we propose the problem of minimizing time-to-rank and outline a framework for addressing it (\Cref{fig:high}).
    
    \item On the \emph{theoretical} side, we formulate the optimization problem of finding a recommendation to minimize time-to-rank (\Cref{sec:results}). We show that computing an optimal recommendation is \NPH{}, even under highly restricted settings (\Cref{thm:HardnessResults}). We complement the intractability results by providing a number of exact (\Cref{thm:ExactAlgorithms}) and approximation algorithms (\Cref{thm:RecPTAS,thm:BordaApproxAlgo,thm:ApproxGeneralWeights}) for special cases of the problem.
    
    \item We use \emph{experimental analysis} for the dual purpose of motivating our modeling assumptions as well as justifying the effectiveness of our approach (\Cref{sec:exp}). Our experiments on Amazon Mechanical Turk reveal two insights about user behavior (\Cref{subsec:User_Behavior}): (1) The ranking strategies of real-world users closely resemble \emph{insertion/selection sort}, and (2) the drag-and-drop time of an alternative varies \emph{linearly} with the distance moved. Additionally, we find that a simple adaptive strategy (based on the Borda count voting rule) can reduce time-to-rank by \emph{up to $50\%$} compared to a random recommendation strategy (\Cref{subsec:Expt_Evaluation}), validating the usefulness of the proposed framework.
\end{itemize}

\subsection{Overview of Our Framework}
\label{subsec:Overview_of_our_Framework}

\begin{figure}
	\centering
	\includegraphics[height=1.4in]{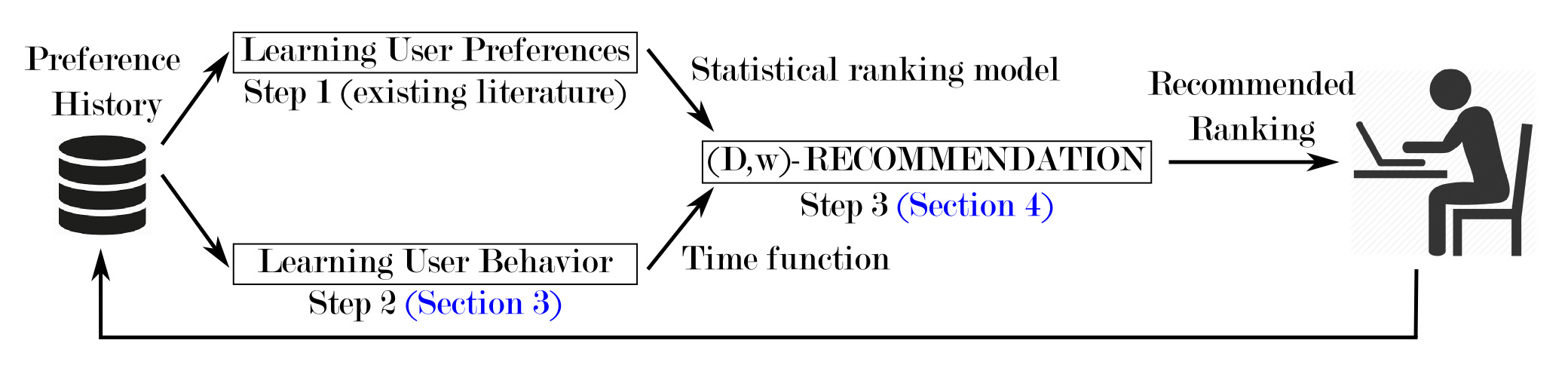}
	\caption{\label{fig:high} \small High level overview of our framework. Our technical contributions are highlighted in blue.}
	\vspace{-0.05in}
\end{figure}

\Cref{fig:high} illustrates the proposed framework which consists of three key steps. In Step 1, we \emph{learn user preferences} from historical data by developing a statistical ranking model, typically in the form of a distribution $\D$ over the space of all rankings (refer to \Cref{sec:Preliminaries} for examples of ranking models). In Step 2, which runs in parallel to Step 1, we \emph{learn user behavior}; in particular, we identify their \emph{sorting strategies} (\Cref{subsec:Sorting_Algorithms}) as well as their \emph{drag-and-drop times} (\Cref{subsec:Drag_and_drop}). Together, these two components define the \cost{} function which models the time taken by a user in transforming a given initial ranking $\sigma$ into a target ranking $\tau$, denoted by $\cost(\sigma,\tau)$. The ranking model $\D$ from Step 1 and the \cost{} function from Step 2 together define the recommendation problem in Step 3, called \Rec{} (the parameter $\w$ is closely related to the $\cost$ function; we elaborate on this below). This is the optimization problem of computing a ranking $\sigma$ that minimizes the expected time-to-rank of the user, i.e., minimizes $\E_{\tau \sim \D} [\cost(\sigma,\tau)]$. The user is then recommended $\sigma$, and her preference history is updated.

The literature on learning statistical ranking models is already well-developed \citep{Guiver09:Bayesian,Awasthi14:Learning,Lu2014:Effective,Zhao16:Learning}. Thus, while this is a key ingredient of our framework (Step 1), in this work we choose to focus on Steps 2 and 3, namely, learning user behavior and solving the recommendation problem.

Recall that the $\cost$ function defines the time taken by a user in transforming a given ranking $\sigma$ into a target ranking $\tau$. For a user who follows a fixed sorting algorithm (e.g., insertion or selection sort), the $\cost$ function can be broken down into \begin{enumerate*}[label=(\arabic*)] \item the number of drag-and-drop operations suggested by the sorting algorithm, and, \item the (average) time taken for each drag-and-drop operation by the user\end{enumerate*}. As we will show in \Cref{lem:EquivalenceOfSortingAlgorithms} in \Cref{subsec:Sorting_Algorithms}, point (1) above is \emph{independent} of the choice of the sorting algorithm. Therefore, the $\cost$ function can be equivalently defined in terms of the \emph{weight function} $\w$, which describes the time taken by a user, denoted by $\w(\ell)$, in moving an alternative by $\ell$ positions via a drag-and-drop operation. For this reason, we use $\w$ in the formulation of \Rec{}.

\paragraph{Applicability} Our framework is best suited for the users who have already formed their preferences, so that the recommended ranking does not bias their preferences. This is a natural assumption in some applications, such as in the meeting organization example in \Cref{sec:Introduction}. In general, however, it is possible that a user, who is undecided between options $A$ and $B$, might prefer $A$ over $B$ if presented in that order by the recommended ranking. A careful study of such biases (aka ``framing effect'') is an interesting direction for future work. 

\paragraph{Additional Related Work} 
Our work is related to the literature on inferring a ground truth ordering from noisy information~\citep{Braverman08:Noisy}, and aggregating preferences by minimizing some notion of distance to the observed rankings such as the total Kendall's Tau distance~\citep{Procaccia16:Optimal}. Previous work on preference learning and learning to rank can also be integrated in our framework~\citep{Liu11:Learning,Lu2014:Effective,Khetan2016:Data-driven,Agarwal2016:On-ranking,Negahban17:Rank,Zhao2018:Composite}. 

\section{Preliminaries}
\label{sec:Preliminaries}

Let $A = \{a_1,\dots,a_m\}$ denote a set of $m$ \emph{alternatives}, and let $\ml(A)$ be the set of all linear orders over $A$. For any $\sigma \in \ml(A)$, $a_i \>_\sigma a_j$ denotes that $a_i$ is preferred over $a_j$ under $\sigma$, and let $\sigma(k)$ denote the $k^{\text{th}}$ most preferred alternative in $\sigma$. A set of $n$ linear orders $\{ \sigma^{(1)},\dots,\sigma^{(n)} \}$ is called a \emph{preference profile}.

\begin{definition}[Kendall's Tau distance; \citealp{K38new}]
	\label{defn:KendallTau}
	Given two linear orders $\sigma, \sigma' \in \ml(A)$, the Kendall's Tau distance $\dkt(\sigma,\sigma')$ is the number of pairwise disagreements between $\sigma$ and $\sigma'$. That is,
	$\dkt(\sigma,\sigma') \coloneqq \sum_{a_i,a_j \in A} \Ind[a_j \succ_{\sigma'} a_i \text{ and } a_i \>_\sigma a_j]$, where $\Ind$ is the indicator function.
\end{definition}

\begin{definition}[Plackett-Luce model; \citealp{Plackett75:Analysis,Luce59:Individual}]
	\label{defn:PL}
	Let $\mytheta \coloneqq (\theta_1,\dots,\theta_m)$ be such that $\theta_i \in (0,1)$ for each $i \in [m]$ and $\sum_{i \in [m]} \theta_i = 1$. Let $\Theta$ denote the corresponding parameter space. The Plackett-Luce $(\PL{})$ model parameterized by $\mytheta \in \Theta$ defines a distribution over the set of linear orders $\ml(A)$ as follows: The probability of generating $\sigma \coloneqq (a_{i_1} \> a_{i_2} \> \dots \> a_{i_m})$ is given by
	\begin{align*}
	\Pr(\sigma | \mytheta) = \frac{\theta_{i_1}}{\sum_{\ell=1}^m \theta_{i_\ell}} \cdot \frac{\theta_{i_2}}{\sum_{\ell=2}^m \theta_{i_\ell}} \cdot \dots \cdot \frac{\theta_{i_{m-1}}}{\theta_{i_{m-1}} + \theta_{i_m}}.
	\end{align*}
	
	More generally, a $k$-mixture Plackett-Luce model $(\kPL{})$ is parameterized by $\{\gamma^{(\ell)},\mytheta^{(\ell)}\}_{\ell=1}^k$, where $\sum_{\ell=1}^k \gamma^{(\ell)} = 1$, $\gamma^{(\ell)} \geq 0$ for all $\ell \in [k]$, and $\mytheta^{(\ell)} \in \Theta$ for all $\ell \in [k]$. The probability of generating $\sigma \in \ml(A)$ is given by
	$\Pr(\sigma | \{\gamma^{(\ell)},\mytheta^{(\ell)}\}_{\ell=1}^k) = \sum_{\ell=1}^k \gamma^{(\ell)} \Pr(\sigma | \mytheta^{(\ell)}).$
\end{definition}

\begin{definition}[Mallows model; \citealp{Mallows57:Non-null}]
	The Mallows model $(\MM)$ is specified by a reference ranking $\sigma^* \in \ml(A)$ and a dispersion parameter $\phi \in (0,1]$. The probability of generating a ranking $\sigma$ is given by
	$\Pr(\sigma | \sigma^*,\phi) = \frac{\phi^{\dkt(\sigma,\sigma^*)}}{Z},$
	where $Z = \sum_{\sigma' \in \ml(A)} \phi^{\dkt(\sigma',\sigma^*)}$.
	
	More generally, a $k$-mixture Mallows model $(\kMM)$ is parameterized by $\{\gamma^{(\ell)},\sigma^*_{(\ell)},\phi_{(\ell)}\}_{\ell=1}^k$, where $\sum_{\ell=1}^k \gamma^{(\ell)} = 1$, $\gamma^{(\ell)} \geq 0$ for all $\ell \in [k]$, and $\sigma^*_{(\ell)} \in \ml(A), \phi_{(\ell)} \in (0,1]$ for all $\ell \in [k]$. The probability of generating $\sigma \in \ml(A)$ is given by
	$\Pr(\sigma | \{\gamma^{(\ell)},\sigma^*_{(\ell)},\phi_{(\ell)}\}_{\ell=1}^k) = \sum_{\ell=1}^k \gamma^{(\ell)} \Pr(\sigma | \sigma^*_{(\ell)},\phi_{(\ell)}).$
\end{definition}

\begin{definition}[Uniform distribution]
	Under the uniform distribution $(\Unif{})$ supported on a preference profile $\{\sigma^{(i)}\}_{i=1}^n$, the probability of generating $\sigma \in \ml(A)$ is $\frac{1}{n}$ if $\sigma \in \{\sigma^{(i)}\}_{i=1}^n$ and $0$ otherwise.
\end{definition}

\section{Modeling User Behavior}
\label{sec:ModelingTime}

In this section, we will model the time spent by the user in transforming the recommended ranking $\sigma$ into the target ranking $\tau$. Our formulation involves the \emph{sorting strategy} of the user (\Cref{subsec:Sorting_Algorithms}) as well as her \emph{drag-and-drop time} (\Cref{subsec:Drag_and_drop}).

\subsection{Sorting Algorithms}
\label{subsec:Sorting_Algorithms}
A \emph{sorting algorithm} takes as input a ranking $\sigma \in \ml(A)$ and performs a sequence of \emph{drag-and-drop} operations until the target ranking is achieved. At each step, an alternative is moved from its current position to another (possibly different) position and the current ranking is updated accordingly. Below we will describe two well-known examples of sorting algorithms: \emph{selection sort} and \emph{insertion sort}. Let $\sigma^{(k)}$ denote the \emph{current} list at time step $k \in \{1,2,\dots\}$ (i.e., \emph{before} the sorting operation at time step $k$ takes place). Thus, $\sigma^{(1)} = \sigma$. For any $\sigma \in \ml(A)$, define the \emph{$k$-prefix set} of $\sigma$ as $P_k(\sigma) \coloneqq \{\sigma(1),\sigma(2),\dots,\sigma(k)\}$ (where $P_0(\sigma) \coloneqq \emptyset$) and corresponding \emph{suffix set} as $S_k(\sigma) \coloneqq A \setminus P_k(\sigma)$.

\paragraph{Selection Sort}
Let $a_i$ denote the most preferred alternative according to $\tau$ in the set $S_{k-1}(\sigma^{(k)})$. At step $k$ of selection sort, the alternative $a_i$ is promoted to a position such that the top $k$ alternatives in the new list are ordered according to $\tau$. Note that this step is well-defined only under the \emph{sorted-prefix property}, i.e., at the beginning of step $k$ of the algorithm, the alternatives in $P_{k-1}(\sigma^{(k)})$ are sorted according to $\tau$. This property is maintained by selection sort.

\paragraph{Insertion Sort}
Let $a_i$ denote the most preferred alternative in $S_{k-1}(\sigma^{(k)})$ according to $\sigma^{(k)}$. At step $k$ of insertion sort, the alternative $a_i$ is promoted to a position such that the top $k$ alternatives in the new list are ordered according to $\tau$. Note that this step is well-defined only under the sorted-prefix property, which is maintained by insertion sort.

\paragraph{Sorting Algorithms}
In this work, we will be concerned with sorting algorithms that involve a \emph{combination} of insertion and selection sort. Specifically, we will use the term \emph{sorting algorithm} to refer to a sequence of steps $s_1,s_2,\dots$ such that each step $s_k$ corresponds to either selection or insertion sort, i.e., $s_k \in \{\text{SEL,INS}\}$ for every $k$. If $s_k = \text{SEL}$, then the algorithm promotes the most preferred alternative in $S_{k-1}(\sigma^{(k)})$ (according to $\tau$) to a position such that the top $k$ alternatives in the new list are ordered according to $\tau$. If $s_k = \text{INS}$, then the algorithm promotes the most preferred alternative in $S_{k-1}(\sigma^{(k)})$ (according to $\sigma^{(k)}$) to a position such that the top $k$ alternatives in the new list are ordered according to $\tau$.

For example, in Figure~\ref{fig:ui}, starting from the recommended list at the extreme left, the user performs a \emph{selection sort} operation (promoting 19 to the top of the current list) followed by an \emph{insertion sort} operation (promoting 30 to its correct position in the sorted prefix $\{19,22,40\}$) followed by either selection or insertion sort operation (promoting 23 to its correct position). We will denote a generic sorting algorithm by $\A$ and the class of all sorting algorithms by $\mathfrak{A}$.

\paragraph{Count Function}
Given a sorting algorithm $\A$, a source ranking $\sigma \in \ml(A)$ and a target ranking $\tau \in \ml(A)$, the \emph{count} function $f_{\A}^{\sigma \rightarrow \tau}: [m-1] \rightarrow \Z_+ \cup \{0\}$ keeps track of the \emph{number} of drag-and-drop operations (and the number of \emph{positions} by which some alternative is moved in each such operation) during the execution of $\A$. Formally, $f_{\A}^{\sigma \rightarrow \tau}(\ell)$ is the number of times some alternative is `moved up by $\ell$ positions' during the execution of algorithm $\A$ when the source and target rankings are $\sigma$ and $\tau$ respectively.\footnote{Notice that we do not keep track of \emph{which} alternative is moved by $\ell$ positions. Indeed, we believe it is reasonable to assume that moving the alternative $a_1$ up by $\ell$ positions takes the same time as it will for $a_2$. Also, we do not need to define the count function for \emph{move down} operations as neither selection sort nor insertion sort will ever make such a move.} For example, let $\A$ be insertion sort, $\sigma = (d,c,a,b)$, and $\tau = (a,b,c,d)$. In step 1, the user considers the alternative $d$ and no move-up operation is required. In step 2, the user promotes $c$ by one position (since $c \>_\tau d$) to obtain the new list $(c,d,a,b)$. In step 3, the user promotes $a$ by two positions to obtain $(a,c,d,b)$. Finally, the user promotes $b$ by two positions to obtain the target list $(a,b,c,d)$. Overall, the user performs one `move up by one position' operation and two `move up by two positions' operations. Hence, $f_\A^{\sigma \rightarrow \tau}(1) = 1$, $f_\A^{\sigma \rightarrow \tau}(2) = 2$, and $f_\A^{\sigma \rightarrow \tau}(3) = 0$. We will write $\si$ to denote the total number of drag-and-drop operations performed during the execution of $\A$, i.e., $\si{} = \sum_{\ell=1}^{m-1} f_{\A}^{\sigma \rightarrow \tau}(\ell)$.

\begin{remark}
    Notice the difference between the number of drag-and-drop operations ($\si$) and the total distance covered (i.e., the number of positions by which alternatives are moved). Indeed, the above example involves three drag-and-drop operations ($\si = 3$), but the total distance moved is $0+1+2+2 = 5$. The latter quantity is equal to $\dkt(\sigma,\tau)$.
\label{rem:NumberOfOperations_vs_Distance}
\end{remark}

\begin{restatable}{lemma}{EquivalenceOfSortingAlgorithms}
	For any two sorting algorithms $\A,\A' \in \mathfrak{A}$, any $\sigma,\tau \in \ml(A)$, and any $\ell \in [m-1]$, $f_{\A}^{\sigma \rightarrow \tau}(\ell) = f_{\A'}^{\sigma \rightarrow \tau}(\ell)$.
	\label{lem:EquivalenceOfSortingAlgorithms}
\end{restatable}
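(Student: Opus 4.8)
The plan is to prove a stronger, algorithm-free claim: \emph{each alternative is drag-and-dropped exactly once, and the number of positions it moves up depends only on $\sigma$ and $\tau$}, not on $\A$. Concretely, for each $a_i \in A$ set
\[
d_i \coloneqq \bigl|\{\, a_j \in A : a_i \succ_\tau a_j \text{ and } a_j \succ_\sigma a_i \,\}\bigr|,
\]
the number of alternatives that $a_i$ should stand above in $\tau$ but stands below in $\sigma$. I will show that in every $\A \in \mathfrak{A}$ the alternative $a_i$ is moved exactly once and by exactly $d_i$ positions. Granting this, $f_\A^{\sigma \rightarrow \tau}(\ell) = |\{\, i : d_i = \ell \,\}|$ for every $\ell$, which is manifestly independent of $\A$, giving the lemma.

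First I would record that each alternative is actively moved exactly once. Every step promotes one alternative from the current suffix into the prefix, and the prefix grows by one element per step from $P_0 = \emptyset$ until it equals $A$; alternatives already in the prefix are only ever shifted passively and are never themselves the subject of a drag-and-drop. Hence each alternative is inserted---and thus actively moved---exactly once (possibly by $0$ positions, which contributes to no $f_\A^{\sigma\rightarrow\tau}(\ell)$ with $\ell \geq 1$).

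The core is a per-pair invariant. Fix $a_i, a_j$ with $a_i \succ_\tau a_j$; I claim their relative order in the current list can change at most once, and only at the step when $a_i$ is inserted, never when $a_j$ is. The order flips only if one of the two is moved up past the other. When $a_i$ is inserted it is placed immediately after all $\tau$-larger alternatives of the current prefix, so it correctly lands before $a_j$ (flipping the pair precisely when $a_j$ preceded it). Conversely, $a_j$ can never overtake $a_i$, and this is where I would split into cases using the sorted-prefix property. If $a_i$ lies in the prefix when $a_j$ is inserted, then $a_j$ is placed after $a_i$ (since $a_i \succ_\tau a_j$) and does not pass it. If $a_i$ lies in the suffix, then an \textsc{ins}-step inserts the current suffix-front, which is before $a_i$, so $a_j$ stays behind $a_i$; and a \textsc{sel}-step inserts the $\tau$-maximal suffix element, which cannot be $a_j$ because $a_i \succ_\tau a_j$ is also in the suffix. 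In every case $a_j$ does not overtake $a_i$. (Equivalently, one shows that no step ever \emph{creates} a $\tau$-inversion.)

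Combining the two ingredients finishes the proof. By the same case analysis, during the unique step in which $a_i$ is moved the alternatives it jumps over are exactly the $\tau$-smaller alternatives that still precede $a_i$ (the $\tau$-larger ones sit at the front of the sorted prefix and are not passed). By the invariant these are precisely the $a_j$ that preceded $a_i$ already in $\sigma$, i.e.\ the $d_i$ alternatives counted above, so $a_i$ moves up by exactly $d_i$ positions in every algorithm. I expect the main obstacle to be the case analysis establishing the per-pair invariant (that $a_j$ never overtakes $a_i$); the remainder is bookkeeping, and as a sanity check $\sum_i d_i = \dkt(\sigma,\tau)$ recovers the total-distance identity noted in \Cref{rem:NumberOfOperations_vs_Distance}.
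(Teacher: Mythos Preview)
Your proof is correct and takes a genuinely different route from the paper's. The paper proceeds by induction on $m$: it tracks the $\tau$-top alternative $a = \tau(1)$ sitting at position $k$ in $\sigma$, does a case split on whether a \textsc{sel} step occurs among the first $k$ steps (in either case $a$ ends up promoted by $k-1$ positions), and then reduces to the instance $\sigma_{-a} \to \tau_{-a}$ via an auxiliary lemma handling the situation where $\sigma$ and $\tau$ share their top element. Your argument instead produces an explicit, algorithm-free formula $f^{\sigma\to\tau}(\ell) = |\{i : d_i = \ell\}|$ by proving the per-pair invariant that no step ever creates a $\tau$-inversion; this is more conceptual, avoids the auxiliary lemma entirely, and as you note yields $\sum_i d_i = \dkt(\sigma,\tau)$ and hence \Cref{thm:LinearWeightKTdist} essentially for free.

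One small wording slip worth fixing: in the case ``$a_i$ lies in the suffix and the step is \textsc{ins},'' you write that $a_j$ (the suffix front) ``stays behind $a_i$.'' In fact $a_j$ is already \emph{in front of} $a_i$ there, so it cannot overtake; the conclusion you want---that the relative order of the pair does not change at this step---still holds, and your parenthetical reformulation ``no step ever creates a $\tau$-inversion'' states it correctly. With that sentence tightened, the argument is clean.
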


In light of \Cref{lem:EquivalenceOfSortingAlgorithms}, we will hereafter drop the subscript $\A$ and simply write $f^{\sigma \rightarrow \tau}$ instead of $f_{\A}^{\sigma \rightarrow \tau}$. The proof of \Cref{lem:EquivalenceOfSortingAlgorithms} appears in \Cref{subsec:Proof_EquivalenceOfSortingAlgorithms}.

\subsection{Drag-and-Drop Time}
\label{subsec:Drag_and_drop}

\paragraph{Weight function}
The \emph{weight} function $\w: [m-1] \rightarrow \R_{\geq 0}$ models the time taken for each drag-and-drop operation; specifically, $\w(\ell)$ denotes the time taken by the user in moving an alternative up by $\ell$ positions.\footnote{Here, `time taken' includes the time spent in \emph{thinking} about which alternative to move as well as actually \emph{carrying out} the move.} Of particular interest to us will be the \emph{linear} weight function $\wlin(\ell) = \ell$ for each $\ell \in [m-1]$ and the \emph{affine} weight function $\waff(\ell) = c\ell + d$ for each $\ell \in [m-1]$ and fixed constants $c,d \in \N$.

\paragraph{Time Function}
Given the count function $f^{\sigma \rightarrow \tau}$ and the weight function $\w$, the \emph{time} function is defined as their inner product, i.e., 
$\cost_\w(\sigma,\tau) = \langle f^{\sigma \rightarrow \tau}, \w \rangle = \sum_{\ell = 1}^{m-1} f^{\sigma \rightarrow \tau}(\ell) \cdot \w(\ell).$

\Cref{thm:LinearWeightKTdist} shows that for the linear weight function $\wlin$, time is equal to the Kendall's Tau distance, and for the affine weight function, time is equal to a weighted combination of Kendall's Tau distance and the total number of moves.

\begin{restatable}{theorem}{LinearWeightKTdist}
	For any $\sigma,\tau \in \ml(A)$, $\cost_{\wlin}(\sigma,\tau) = \dkt(\sigma,\tau)$ and $\cost_{\waff}(\sigma,\tau) = c \cdot \dkt(\sigma,\tau) + d \cdot \si$.
	\label{thm:LinearWeightKTdist}
\end{restatable}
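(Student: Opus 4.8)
The plan is to derive the second identity as an easy consequence of the first, so the real content is the claim $\cost_{\wlin}(\sigma,\tau) = \dkt(\sigma,\tau)$. Since $\cost_\w$ is by definition linear in the weight function and $\waff = c\,\wlin + d\,\mathbf{1}$ (with $\mathbf{1}$ the all-ones weight function), I would simply expand
\[
\cost_{\waff}(\sigma,\tau) = \sum_{\ell=1}^{m-1} f^{\sigma\to\tau}(\ell)\,(c\ell + d) = c\sum_{\ell=1}^{m-1} \ell\, f^{\sigma\to\tau}(\ell) + d\sum_{\ell=1}^{m-1} f^{\sigma\to\tau}(\ell) = c\cdot\cost_{\wlin}(\sigma,\tau) + d\cdot\si,
\]
where the second sum is exactly $\si$ by definition. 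Substituting the first identity into the first term then completes part two.

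For the first identity, observe that $\cost_{\wlin}(\sigma,\tau) = \sum_{\ell} \ell\, f^{\sigma\to\tau}(\ell)$ is precisely the \emph{total distance moved} during the sort (cf. \Cref{rem:NumberOfOperations_vs_Distance}), so it suffices to show this total equals the inversion count $\dkt(\sigma,\tau)$. By \Cref{lem:EquivalenceOfSortingAlgorithms} the count function $f^{\sigma\to\tau}$, and hence the total distance, does not depend on the algorithm, so I would fix insertion sort for concreteness. The argument is a potential-function one: take the potential after step $k$ to be $\dkt(\sigma^{(k)},\tau)$, the Kendall's Tau distance between the current list and the target. This equals $\dkt(\sigma,\tau)$ initially and $0$ at termination, so it is enough to show that a single move-up by $\ell$ positions lowers the potential by exactly $\ell$.

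The heart of the proof is therefore a per-step claim. When insertion sort promotes $a_i$ — which sits at position $k$ of the current list, being the $\sigma^{(k)}$-top of the suffix — up by $\ell$ positions to land at position $k-\ell$, the $\ell$ alternatives it jumps over are exactly those currently above $a_i$ that are less preferred than $a_i$ under $\tau$. I would verify this from the sorted-prefix property: the prefix $P_{k-1}(\sigma^{(k)})$ is already $\tau$-sorted and $a_i$ is inserted so that the new top-$k$ block stays $\tau$-sorted, whence every jumped-over element $b$ satisfies $a_i \succ_\tau b$ while previously $b \succ_{\sigma^{(k)}} a_i$. Each such jump thus resolves precisely one $\tau$-inversion, the relative order of every pair not involving $a_i$ is unchanged, and the potential drops by exactly $\ell$. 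Summing over all steps gives total distance $=\dkt(\sigma,\tau)$, as needed.

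The step I expect to demand the most care is this per-move claim, specifically checking the two halves: (i) the jumped-over elements are \emph{precisely} those forming a $\tau$-inversion with $a_i$ (no more, no fewer), and (ii) no pair disjoint from $a_i$ changes relative order, so the only change in the inversion count comes from the $\ell$ resolved pairs. Both follow once the sorted-prefix invariant is invoked, but the bookkeeping — tracking that the moved element starts at position $k$, lands at position $k-\ell$, and shifts the $\ell$ intervening elements down by one — is where the indexing must be stated carefully.
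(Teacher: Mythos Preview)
Your proposal is correct and takes essentially the same approach as the paper: both argue that each move-up-by-$\ell$ operation resolves exactly $\ell$ inversions with respect to $\tau$, and both obtain the affine case by linearly expanding $\waff = c\,\wlin + d$. The paper states the per-move claim tersely (``any fixed pair of alternatives is swapped at most once \dots each move up by $\ell$ slots contributes $\ell$ units to the Kendall's Tau distance''), whereas your potential-function framing with the sorted-prefix invariant makes the same point more explicitly.
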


The proof of \Cref{thm:LinearWeightKTdist} appears in \Cref{subsec:Proof_LinearWeightKTdist}.

\section{Formulation of Recommendation Problem and Theoretical Results}
\label{sec:results}

We model the recommendation problem as the following computational problem: Given the \emph{preference distribution} $\D$ of the user and her $\cost$ function (which, in turn, is determined by the weight function $\w$), find a ranking that minimizes the expected time taken by the user to transform the recommended ranking $\sigma$ into her preference $\tau$.

\begin{definition}[\Rec{}]
	Given a distribution $\D$ over $\ml(A)$, a weight function $\w$, and a number $\delta \in \Q$, does there exist $\sigma \in \ml(A)$ so that $\E_{\tau \sim \D} [\cost_\w(\sigma,\tau)] \leq \delta$? 
\end{definition}

We will focus on settings where the distribution $\D$ is Plackett-Luce, Mallows, or Uniform, and the weight function $\w$ is Linear, Affine, or General. Note that if the quantity $\E_{\tau \sim \D} [\cost_\w(\sigma,\tau)]$ can be computed in polynomial time for a given distribution $\D$ and weight function $\w$, then \Rec{} is in \NP{}.

\begin{table*}
	\footnotesize
	\centering
	\begin{tabular}
		{>{\centering}m{0.17\textwidth} >{\centering}m{0.17\textwidth} >{\centering}m{0.17\textwidth} >{\centering}m{0.21\textwidth} >{\centering\arraybackslash}m{0.13\textwidth}}
		\toprule
		\multirow{2}{*}{\textbf{Distribution $\D$}} & \multicolumn{3}{c}{\textbf{Linear Weights}} & \multicolumn{1}{c}{\textbf{General Weights}}\\
		\cmidrule(l{5pt}r{5pt}){2-4} \cmidrule(l{5pt}r{5pt}){5-5}  
		& Hardness & Exact Algo. & Approx. Algo. & Approx. Algo.\\
		\midrule
		$k$-mixture Plackett-Luce (\kPL) & \NPc{} even for $k=4$ \hspace{1.5cm}  (\Cref{thm:HardnessResults}) & \P{} for $k = 1$ \hspace{1.5cm}  (\Cref{thm:ExactAlgorithms}) & \PTAS{} (\Cref{thm:RecPTAS}) \quad $5$-approx. (\Cref{thm:BordaApproxAlgo}) & $\alpha \beta$-approx. \hspace{1.5cm}  (\Cref{thm:ApproxGeneralWeights}) \\
		\cmidrule{1-5}
		$k$-mixture Mallows (\kMM) & \NPc{} even for $k=4$ \hspace{1.5cm}  (\Cref{thm:HardnessResults}) & \P{} for $k = 1$ \hspace{1.5cm}  (\Cref{thm:ExactAlgorithms}) &  \PTAS{} (\Cref{thm:RecPTAS}) \quad $5$-approx. (\Cref{thm:BordaApproxAlgo}) & $\alpha \beta$-approx. \hspace{1.5cm}  (\Cref{thm:ApproxGeneralWeights}) \\
		\cmidrule{1-5}
		Uniform (\Unif) & \NPc{} even for $n=4$ \hspace{1.5cm}  (\Cref{thm:HardnessResults}) & \P{} for $n \in \{1,2\}$ \hspace{1.5cm}  (\Cref{thm:ExactAlgorithms}) & \PTAS{} (\Cref{thm:RecPTAS}) \quad $5$-approx. (\Cref{thm:BordaApproxAlgo}) & $\alpha \beta$-approx. \hspace{1.5cm}  (\Cref{thm:ApproxGeneralWeights}) \\
		\bottomrule
	\end{tabular}
	\vspace{0.05in}
	\caption{Computational complexity results for \Rec{}. Each row corresponds to a preference model and each column corresponds to a weight function. We use the shorthands \P{}, \NPc{}, \PTAS{}, and $\alpha \beta$-approx. to denote polynomial-time (exact) algorithm, \NPC{}, polynomial-time approximation scheme, and $\alpha \beta$-approximation algorithm respectively. The parameters $\alpha$ and $\beta$ capture how closely a given weight function approximates a \emph{linear} weight function; see \Cref{defn:ClosenessOfWeights}.}
	\label{tab:Results}
\end{table*}

Our computational results for \Rec{} are summarized in \Cref{tab:Results}. We show that this problem is \NPH{}, even when the weight function is linear (\Cref{thm:HardnessResults}). On the algorithmic side, we provide a polynomial-time approximation scheme (PTAS) and a $5$-approximation algorithm for the linear weight function (\Cref{thm:RecPTAS,thm:BordaApproxAlgo}), and an approximation scheme for non-linear weights (\Cref{thm:ApproxGeneralWeights}). 

\begin{restatable}[\textbf{Exact Algorithms}]{theorem}{ExactAlgorithms}
	\Rec{} is solvable in polynomial time when $\w$ is linear and $\D$ is either (a) $k$-mixture Plackett-Luce $(\kPL)$ with $k = 1$, (b) $k$-mixture Mallows model $(\kMM{})$ with $k = 1$, or (c) a uniform distribution with support size $n \leq 2$.
	\label{thm:ExactAlgorithms}
\end{restatable}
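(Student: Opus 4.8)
The plan is to first invoke \Cref{thm:LinearWeightKTdist}, which for the linear weight function gives $\cost_{\wlin}(\sigma,\tau) = \dkt(\sigma,\tau)$. Hence \Rec{} reduces to minimizing $\E_{\tau \sim \D}[\dkt(\sigma,\tau)]$ over $\sigma \in \ml(A)$. Writing $p_{ij} \coloneqq \Pr_{\tau \sim \D}[a_i \succ_\tau a_j]$ for the pairwise marginals and expanding \Cref{defn:KendallTau}, the objective becomes $\E_{\tau \sim \D}[\dkt(\sigma,\tau)] = \sum_{\{a_i,a_j\}} \big(\Ind[a_i \succ_\sigma a_j]\,p_{ji} + \Ind[a_j \succ_\sigma a_i]\,p_{ij}\big)$, a sum of per-pair penalties. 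This is a weighted feedback-arc-set objective, which is \NPH{} in general; the whole point of the three special cases is that their marginals turn out to be \emph{order-consistent}, which makes the minimizer trivial to read off.

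For cases (a) and (b) I would use the per-pair lower bound that holds for \emph{every} $\sigma$: each unordered pair contributes at least $\min(p_{ij},p_{ji})$, so $\E_{\tau}[\dkt(\sigma,\tau)] \ge \sum_{\{a_i,a_j\}} \min(p_{ij},p_{ji})$. The plan is then to show that in both models the marginals are transitive, i.e.\ a single linear order places the larger-marginal element of every pair on top; the order that sorts by this score attains the lower bound simultaneously on every pair and is therefore a global optimum. For \kPL{} with $k=1$, the exponential-race (Gumbel) representation of Plackett--Luce yields the closed form $p_{ij} = \theta_i/(\theta_i + \theta_j)$, so the sorting key is simply $\theta_i$ and the recommendation is the order of decreasing $\theta_i$. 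For \kMM{} with $k=1$, a symmetry/coupling argument shows $p_{ij} > 1/2$ exactly when $a_i \succ_{\sigma^*} a_j$ (for $\phi < 1$), so the reference ranking $\sigma^*$ is itself optimal. In both cases the marginals, and hence the optimal value $\sum_{\{a_i,a_j\}} \min(p_{ij},p_{ji})$, are computable in polynomial time, and we compare this value against $\delta$ to answer the decision question.

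Case (c) needs only that $\dkt$ is a metric. For $n=1$ the distribution is a point mass on $\sigma^{(1)}$, minimized at $\sigma = \sigma^{(1)}$ with value $0$. For $n=2$ we have $\E_{\tau}[\dkt(\sigma,\tau)] = \tfrac12\big(\dkt(\sigma,\sigma^{(1)}) + \dkt(\sigma,\sigma^{(2)})\big)$; the triangle inequality gives $\dkt(\sigma,\sigma^{(1)}) + \dkt(\sigma,\sigma^{(2)}) \ge \dkt(\sigma^{(1)},\sigma^{(2)})$, and equality is attained by taking $\sigma = \sigma^{(1)}$, so $\sigma^{(1)}$ is optimal with value $\tfrac12\dkt(\sigma^{(1)},\sigma^{(2)})$. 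This argument collapses for $n \ge 3$, consistent with the hardness asserted in \Cref{thm:HardnessResults}.

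The main obstacle is the pairwise-marginal analysis underlying cases (a) and (b): establishing the closed form $p_{ij} = \theta_i/(\theta_i+\theta_j)$ for Plackett--Luce and, especially, proving the order-consistency $p_{ij} > 1/2 \iff a_i \succ_{\sigma^*} a_j$ for Mallows, since a naive position-swap of two non-adjacent alternatives does not preserve $\dkt$. I expect the cleanest route for Mallows is an injection argument (or an appeal to its repeated-insertion description) showing that the probability mass concentrates on the $\sigma^*$-consistent side of each pair. Once these marginal facts are in hand, the remaining steps — the per-pair lower bound, the sorting step, and the polynomial-time evaluation of the optimal expected cost — are routine.
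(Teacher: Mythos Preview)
Your proposal is correct and arrives at the same answers as the paper, but the mechanics differ in two places worth noting.

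For parts (a) and (b), the paper proves optimality by an adjacent-swap exchange argument: starting from any $\sigma$, swapping an adjacent out-of-order pair never increases $\E_\tau[\dkt(\sigma,\tau)]$, so repeated swaps drive $\sigma$ to the parameter-consistent order (decreasing $\theta_i$ for \PL{}, the reference $\sigma^*$ for \MM{}). You instead use the global per-pair lower bound $\sum_{\{i,j\}}\min(p_{ij},p_{ji})$ together with transitivity of the marginals to exhibit a ranking achieving it. Both are standard and equivalent; your route is a bit more direct, while the exchange argument makes it transparent that only adjacent comparisons matter. One shortcut you are missing: the Mallows order-consistency you flag as the ``main obstacle'' is already available in the paper as \Cref{prop:MM-pairwise} (the closed-form pairwise marginal due to \citealp{Mallows57:Non-null}), from which $p_{ij}\ge 1/2 \iff a_i\succ_{\sigma^*}a_j$ follows by inspection for $\phi\in(0,1]$. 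No injection or repeated-insertion argument is needed.

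For part (c) with $n=2$, the paper counts disagreement pairs directly, whereas you invoke the triangle inequality for $\dkt$. Your argument is cleaner and yields the optimal value $\tfrac12\dkt(\sigma^{(1)},\sigma^{(2)})$ explicitly; the paper's argument gives the same optimizer but phrases it combinatorially.
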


\begin{restatable}[\textbf{Hardness results}]{theorem}{HardnessResults}
	\Rec{} is \NPC{} even when $\w$ is linear and $\D$ is either (a) $k$-mixture Plackett-Luce model $(\kPL{})$ for $k=4$, (b) $k$-mixture Mallows model $(\kMM{})$ for $k=4$, or (c) a uniform distribution over $n=4$ linear orders.
	\label{thm:HardnessResults}
\end{restatable}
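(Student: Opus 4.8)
\textbf{Membership in \NP{}.} The plan is to first place \Rec{} in \NP{} for all three families. As observed in \Cref{sec:results}, it suffices to compute $\E_{\tau\sim\D}[\cost_{\wlin}(\sigma,\tau)]$ in polynomial time. By \Cref{thm:LinearWeightKTdist} this equals $\E_{\tau\sim\D}[\dkt(\sigma,\tau)]$, and by the pairwise form of $\dkt$ (\Cref{defn:KendallTau}) together with linearity of expectation,
\[ \E_{\tau\sim\D}[\dkt(\sigma,\tau)] = \sum_{a_i \>_\sigma a_j} {\Pr}_{\tau\sim\D}[a_j \>_\tau a_i]. \]
Thus it is enough to evaluate the $\binom{m}{2}$ pairwise marginals ${\Pr}_{\tau\sim\D}[a_i\>_\tau a_j]$ efficiently. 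This is immediate for a uniform distribution; for \kPL{} each component's pairwise marginal is $\theta_i/(\theta_i+\theta_j)$ (the Plackett--Luce model is closed under marginalization to pairs); and for \kMM{} each component's pairwise marginals admit a polynomial-time formula. Hence \Rec{} is in \NP{} throughout.

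\textbf{Core reduction and the uniform case.} The key conceptual point is that, under linear weights, minimizing expected cost \emph{is} the problem of minimizing expected Kendall's Tau distance to $\D$, i.e.\ Kemeny aggregation (equivalently, weighted feedback arc set on the tournament whose arc weights are the marginals above). For the uniform distribution over $n=4$ rankings $\sigma^{(1)},\dots,\sigma^{(4)}$, the expected cost of a candidate $\sigma$ is $\tfrac14\sum_{\ell=1}^4\dkt(\sigma,\sigma^{(\ell)})$, so a cost-minimizing recommendation is exactly a Kemeny consensus of the four votes. Since Kemeny aggregation is known to be \NPH{} already for four votes, scaling the decision threshold $\delta$ by $1/4$ yields an immediate reduction, settling part~(c).

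\textbf{The mixture models.} For \kMM{} and \kPL{} with $k=4$, I would \emph{simulate} the four-vote uniform instance by four components that each concentrate on one of the votes $\sigma^{(\ell)}$. For Mallows take reference rankings $\sigma^*_{(\ell)}=\sigma^{(\ell)}$, equal mixing weights $\gamma^{(\ell)}=1/4$, and a common dispersion $\phi$ close to $0$; as $\phi\to 0$ each component's pairwise marginals tend to the indicator of the order of $\sigma^{(\ell)}$, so the mixture's marginals approach the four-vote majority margins. For Plackett--Luce instead make each component ``spiky'', e.g.\ $\theta^{(\ell)}_{\sigma^{(\ell)}(r)}\propto \epsilon^{\,r}$, so that each component's pairwise marginal tends to the corresponding indicator as $\epsilon\to 0$. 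In both cases $\E_{\tau\sim\D}[\dkt(\sigma,\tau)]$ converges, uniformly over the finitely many $\sigma\in\ml(A)$, to the four-vote Kemeny objective $\tfrac14\sum_\ell \dkt(\sigma,\sigma^{(\ell)})$.

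\textbf{Main obstacle.} Because Mallows and Plackett--Luce have full support, no mixture reproduces the finite uniform instance exactly, so the reduction must be a \emph{perturbation} argument rather than an identity; this is the crux. I would use a gap argument: the four-vote Kemeny objective lies in $\tfrac14\Z$, so rankings with distinct scores differ by at least $1/4$. Bounding the perturbation $\bigl|\E_{\tau\sim\D}[\dkt(\sigma,\tau)] - \tfrac14\sum_\ell\dkt(\sigma,\sigma^{(\ell)})\bigr|$ by a quantity of order $\poly(m)\cdot\phi$ (resp.\ $\poly(m)\cdot\epsilon$), and choosing the concentration parameter to be an appropriate inverse polynomial in $m$ (hence of polynomial bit-length), drives the total perturbation below $1/8$; this guarantees that the ranking order of candidates by expected cost agrees with their order by Kemeny score. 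By the membership argument the resulting expected cost, and thus the comparison against $\delta$, is computed exactly in polynomial time. Finishing the proof then reduces to (i) making the $\poly(m)\cdot\phi$ perturbation bound precise and (ii) choosing $\delta$ to separate yes- from no-instances of four-vote Kemeny.
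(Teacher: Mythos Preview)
Your proposal is correct and follows essentially the same route as the paper: reduce from four-vote \Kem{}, concentrate each of the four mixture components on one of the input rankings, and for \kPL{} close the reduction with an integrality-gap argument. Two points of comparison are worth noting. First, for \kMM{} the paper simply sets $\phi_{(\ell)}=0$, so each component becomes a point mass on $\sigma^{(\ell)}$ and the expected cost equals the four-vote Kemeny objective \emph{exactly}; no perturbation analysis is needed there (your concern about full support does not arise at the boundary $\phi=0$, though one may quibble that the model was defined with $\phi\in(0,1]$). Second, for \kPL{} the paper instantiates your ``spiky'' idea with $\theta^{(\ell)}_i=m^{4(m-\rank(\sigma^{(\ell)},a_i))}$ (effectively $\epsilon=m^{-4}$) and proves the two-sided bound that the Kemeny score is at most $\delta$ iff the expected \kPL{} cost is at most $\delta+\tfrac12$, using that each ``wrong'' pair contributes at least $m^4/(m^4+1)$ and each ``right'' pair at most $1/(m^4+1)$; this is exactly the quantitative version of the $\poly(m)\cdot\epsilon$ perturbation bound you outlined.
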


\begin{restatable}[\textbf{PTAS}]{theorem}{RecPTAS}
	\Rec{} admits a polynomial time approximation scheme $(\PTAS{})$ when $\w$ is linear and $\D$ is either (a) $k$-mixture Plackett-Luce model $(\kPL{})$ for $k \in \N$, (b) $k$-mixture Mallows model $(\kMM{})$ for $k \in \N$, or (c) a uniform distribution $(\Unif)$.
	\label{thm:RecPTAS}
\end{restatable}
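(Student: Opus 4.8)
Since $\w$ is linear, \Cref{thm:LinearWeightKTdist} lets me replace the objective $\E_{\tau \sim \D}[\cost_{\wlin}(\sigma,\tau)]$ by the expected Kendall's Tau distance $\E_{\tau \sim \D}[\dkt(\sigma,\tau)]$, so in all three cases the task is to find $\sigma \in \ml(A)$ minimizing $\E_{\tau \sim \D}[\dkt(\sigma,\tau)]$. The plan is to recognize this as an instance of \WFAST{} obeying the \emph{probability constraint}, and then invoke the known PTAS for that problem. First I would linearize the expectation over pairs: writing $p_{ij} \coloneqq \Pr_{\tau \sim \D}[a_i \succ_\tau a_j]$ and using the pairwise definition of $\dkt$,
\[
\E_{\tau \sim \D}[\dkt(\sigma,\tau)] = \sum_{a_i \succ_\sigma a_j} p_{ji},
\]
so the cost of $\sigma$ is exactly the total weight of the arcs pointing ``backward'' relative to $\sigma$ in the weighted tournament on $A$ whose arc $a_j \to a_i$ carries weight $p_{ji}$. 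Because $\D$ is supported on strict orders, $p_{ij} + p_{ji} = 1$ for every pair, which is precisely the probability constraint under which \WFAST{} is known to admit a PTAS (Kenyon-Mathieu and Schudy). Hence minimizing $\E_{\tau \sim \D}[\dkt(\sigma,\tau)]$ is a PTAS-tractable \wfast{} instance; for the uniform case the objective is literally Kemeny rank aggregation, so the PTAS applies verbatim.

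What remains is to check that the instance can be \emph{constructed} in polynomial time, i.e.\ that the marginals $p_{ij}$ are efficiently computable for each family; the PTAS then runs on the $m \times m$ matrix $(p_{ij})$ for any fixed $\epsilon > 0$. For $\Unif{}$ this is immediate, since $p_{ij}$ is just the fraction of the $n$ support rankings placing $a_i$ before $a_j$. For a single Plackett-Luce component I would use the Luce identity $\Pr_{\tau}[a_i \succ_\tau a_j] = \theta_i/(\theta_i+\theta_j)$, and linearity of expectation over a mixture $\{\gamma^{(\ell)},\mytheta^{(\ell)}\}$ gives $p_{ij} = \sum_\ell \gamma^{(\ell)}\,\theta_i^{(\ell)}/(\theta_i^{(\ell)}+\theta_j^{(\ell)})$. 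The same convex-combination argument reduces both $\kPL{}$ and $\kMM{}$ to computing the marginals of a single component, so the mixture structure adds nothing beyond averaging.

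The step I expect to be the main obstacle is computing the pairwise marginals of a single \emph{Mallows} component, for which there is no one-line closed form as in Plackett-Luce. Here I would lean on the repeated-insertion description of the Mallows model: build $\tau$ by inserting the reference alternatives $\sigma^*(1),\sigma^*(2),\dots$ one at a time, with $\sigma^*(i)$ landing at relative position $j$ with probability proportional to $\phi^{\,i-j}$. The relative order of a fixed pair $(a_i,a_j)$ is frozen once the later-inserted of the two is placed and is unaffected by all subsequent insertions, so $p_{ij}$ can be recovered by tracking the (polynomially computable) position distribution of the earlier-inserted element at the moment the later one is inserted, and summing the insertion probabilities that place it ahead. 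This yields each $p_{ij}$ via a short dynamic program, completing the construction of the \wfast{} instance and hence the PTAS. I would present the three families as parallel cases, with this marginal computation as the only model-specific ingredient.
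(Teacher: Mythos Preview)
Your plan is correct and follows essentially the same route as the paper: reduce via \Cref{thm:LinearWeightKTdist} to minimizing expected Kendall's Tau, rewrite this as a \wfast{} instance with weights $p_{ij}$ satisfying $p_{ij}+p_{ji}=1$, and invoke the Kenyon--Mathieu--Schudy PTAS (\Cref{prop:KendallTauPTAS}). The one place you diverge is the Mallows pairwise marginal, which you flag as ``the main obstacle'' and propose to compute by a repeated-insertion dynamic program; in fact the paper already records a closed-form expression for $\Pr_{\tau \sim (\sigma^*,\phi)}[a_i \succ_\tau a_j]$ in \Cref{prop:MM-pairwise} (depending only on $\phi$ and the rank gap $\Delta$ in $\sigma^*$), so no DP is needed and the $\kMM{}$ case is handled just as directly as $\kPL{}$ and $\Unif{}$.
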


The \PTAS{} in \Cref{thm:RecPTAS} is quite complicated and is primarily of theoretical interest (indeed, for any fixed $\varepsilon > 0$, the running time of the algorithm is ${m}^{2^{\Otilde(1/\varepsilon)}}$, making it difficult to be applied in experiments). A simpler and more practical algorithm (although with a worse approximation) is based on the well-known Borda count voting rule (\Cref{thm:BordaApproxAlgo}).

\begin{restatable}[\textbf{$5$-approximation}]{theorem}{BordaApproxAlgo}
	\Rec{} admits a polynomial time $5$-approximation algorithm when $\w$ is linear and $\D$ is either (a) $k$-mixture Plackett-Luce model $(\kPL{})$ for $k \in \N$, (b) $k$-mixture Mallows model $(\kMM{})$ for $k \in \N$, or (c) a uniform distribution $(\Unif)$.
	\label{thm:BordaApproxAlgo}
\end{restatable}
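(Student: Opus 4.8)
The plan is to recognize that, for linear weights, \Rec{} is precisely the problem of aggregating the distribution $\D$ under Kendall's Tau distance, which is a weighted feedback arc set problem on a tournament satisfying the \emph{probability constraint}; the $5$-approximation will then follow from sorting by Borda score.

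First I would use \Cref{thm:LinearWeightKTdist} to rewrite the objective: since $\w$ is linear, $\cost_{\wlin}(\sigma,\tau)=\dkt(\sigma,\tau)$, so we seek $\sigma$ minimizing $\E_{\tau\sim\D}[\dkt(\sigma,\tau)]$. By \Cref{defn:KendallTau} and linearity of expectation, writing $p_{ij}\coloneqq\Pr_{\tau\sim\D}[a_i\succ_\tau a_j]$ for the pairwise marginals, we have
\begin{equation*}
\E_{\tau\sim\D}[\dkt(\sigma,\tau)]=\sum_{a_i\succ_\sigma a_j} p_{ji}.
\end{equation*}
This is exactly the cost of the ordering $\sigma$ in the weighted tournament whose arc $a_i\to a_j$ carries weight $p_{ij}$: an arc $a_i\to a_j$ is a ``backward'' arc precisely when $a_j\succ_\sigma a_i$, and summing the backward weights reproduces the right-hand side above. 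Because rankings are strict, $p_{ij}+p_{ji}=1$, so the instance satisfies the probability constraint, and the optimum of \Rec{} equals the minimum feedback arc set value $\min_\sigma\sum_{a_i\succ_\sigma a_j}p_{ji}$.

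Next I would argue the marginals $p_{ij}$ are computable in polynomial time for each of the three families, so that the reduction is constructive. For \Unif{} this is a direct count over the support. For a single \PL{} component, Luce's choice axiom gives the closed form $p_{ij}=\theta_i/(\theta_i+\theta_j)$, and for a single \MM{} component the pairwise marginals admit an efficiently computable closed form (e.g.\ via the repeated-insertion representation of Mallows). Since a $k$-mixture marginal is the corresponding convex combination $\sum_{\ell}\gamma^{(\ell)}p_{ij}^{(\ell)}$, the marginals for \kPL{} and \kMM{} are polynomial-time computable as well.

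Finally I would invoke the result of Coppersmith, Fleischer, and Rudra that, for a weighted tournament satisfying the probability constraint, ordering the vertices by their weighted number of wins yields a $5$-approximation to the minimum feedback arc set. The weighted number of wins of $a_i$ is $\sum_{j\ne i}p_{ij}$, which is exactly its Borda score under $\D$; hence the Borda ranking (sort alternatives by decreasing Borda score, breaking ties arbitrarily) is a $5$-approximation to \Rec{}, and all of its ingredients---computing $p_{ij}$, summing to Borda scores, and sorting---run in polynomial time. I expect the main obstacle to be the marginal computation for the Mallows family: unlike \PL{} and \Unif{}, the pairwise probabilities of \MM{} are not immediate and require the closed-form/dynamic-programming argument, after which the approximation guarantee is a black-box consequence of the tournament result.
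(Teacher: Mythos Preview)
Your proposal is correct and follows essentially the same route as the paper: reduce \Rec{} with linear weights to \wfast{} with probability constraints via \Cref{thm:LinearWeightKTdist} and the pairwise marginals (computed as in \Cref{eqn:PairwiseMarginalkPL,eqn:PairwiseMarginalkMM} for \kPL{} and \kMM{}), then invoke the Coppersmith--Fleischer--Rudra Borda-based $5$-approximation (\Cref{prop:BordaApprox}). The paper in fact proves this theorem as a one-line corollary of the reduction already carried out for \Cref{thm:RecPTAS}, swapping in \Cref{prop:BordaApprox} for \Cref{prop:KendallTauPTAS}.
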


Our next result (\Cref{thm:ApproxGeneralWeights}) provides an approximation guarantee for \Rec{} that applies to \emph{non-linear} weight functions, as long as they are ``close'' to the linear weight function in the following sense:

\begin{definition}[Closeness-of-weights]
	\label{defn:ClosenessOfWeights}
	A weight function $\w$ is said to be $(\alpha,\beta)$-close to another weight function $\w'$ if there exist $\alpha,\beta \geq 1$ such that for every $\ell \in [m-1]$, we have
	$$\w'(\ell) / \beta \leq \w(\ell) \leq \alpha \w'(\ell).$$
\end{definition}

For \emph{any} (possibly non-linear) weight function $\w$ that is $(\alpha,\beta)$ close to the linear weight function $\wlin$, \Cref{thm:ApproxGeneralWeights} provides an $\alpha \beta$-approximation scheme for \Rec{}.

\begin{restatable}[Approximation for general weights]{theorem}{ApproxGeneralWeights}
	Given any $\varepsilon > 0$ and any weight function $\w$ that is $(\alpha,\beta)$-close to the linear weight function $\wlin$, there exists an algorithm that runs in time $m^{2^{\Otilde(1/\varepsilon)}}$ and returns a linear order $\sigma$ such that
	$$\E_{\tau \sim \D} [\cost_\w(\sigma,\tau)] \leq \alpha \beta (1+\varepsilon) \E_{\tau \sim \D} [\cost_\w(\sigma^*,\tau)],$$
	where $\sigma^* \in \arg\min_{\sigma' \in \ml(A)} \E_{\tau \sim \D} [\cost_\w(\sigma',\tau)]$.
	\label{thm:ApproxGeneralWeights}
\end{restatable}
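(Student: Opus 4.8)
The plan is to invoke the linear-weight \PTAS{} of \Cref{thm:RecPTAS} as a black box and argue that a good recommendation for the linear cost is automatically a good recommendation for the nearby weight $\w$. Concretely, for the given $\varepsilon$ I would run the \PTAS{} (which applies since $\D$ is \kPL{}, \kMM{}, or \Unif{}) to obtain in time $m^{2^{\Otilde(1/\varepsilon)}}$ a ranking $\hat\sigma$ satisfying $\E_{\tau\sim\D}[\cost_{\wlin}(\hat\sigma,\tau)] \leq (1+\varepsilon)\,\E_{\tau\sim\D}[\cost_{\wlin}(\sigma^*_\lin,\tau)]$, where $\sigma^*_\lin$ minimizes the expected linear cost. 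The algorithm then simply outputs $\sigma = \hat\sigma$, and the running time is inherited directly from the \PTAS{}. Throughout I abbreviate $\E_{\tau\sim\D}$ by $\E$.

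The key step is a pointwise sandwich relating $\cost_\w$ to $\cost_{\wlin}$. Since $\w$ is $(\alpha,\beta)$-close to $\wlin$, \Cref{defn:ClosenessOfWeights} gives $\wlin(\ell)/\beta \leq \w(\ell) \leq \alpha\,\wlin(\ell)$ for every $\ell \in [m-1]$. Because each count $f^{\sigma \rightarrow \tau}(\ell)$ is nonnegative, multiplying by $f^{\sigma \rightarrow \tau}(\ell)$ and summing over $\ell$ preserves the inequalities, so for all $\sigma,\tau \in \ml(A)$,
\[
\tfrac{1}{\beta}\,\cost_{\wlin}(\sigma,\tau) \;\leq\; \cost_\w(\sigma,\tau) \;\leq\; \alpha\,\cost_{\wlin}(\sigma,\tau).
\]
Taking $\E$ and using linearity of expectation yields the same sandwich for the expected costs.

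Chaining these bounds gives the result. Writing $\sigma^*$ for the $\w$-optimum, the upper sandwich together with the \PTAS{} guarantee gives $\E[\cost_\w(\hat\sigma,\tau)] \leq \alpha\,\E[\cost_{\wlin}(\hat\sigma,\tau)] \leq \alpha(1+\varepsilon)\,\E[\cost_{\wlin}(\sigma^*_\lin,\tau)]$. Optimality of $\sigma^*_\lin$ for the linear cost gives $\E[\cost_{\wlin}(\sigma^*_\lin,\tau)] \leq \E[\cost_{\wlin}(\sigma^*,\tau)]$, and the lower sandwich, rewritten as $\cost_{\wlin} \leq \beta\,\cost_\w$, gives $\E[\cost_{\wlin}(\sigma^*,\tau)] \leq \beta\,\E[\cost_\w(\sigma^*,\tau)]$. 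Combining the three inequalities yields $\E[\cost_\w(\hat\sigma,\tau)] \leq \alpha\beta(1+\varepsilon)\,\E[\cost_\w(\sigma^*,\tau)]$, as claimed.

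I do not expect a genuine technical obstacle here, since the argument is essentially a clean reduction to \Cref{thm:RecPTAS}; the only points requiring care will be (i) that the multiplicative sandwich survives summation precisely because the counts $f^{\sigma \rightarrow \tau}(\ell)$ are nonnegative (with \Cref{lem:EquivalenceOfSortingAlgorithms} ensuring a single well-defined count function), and (ii) that the comparison is anchored to the correct optima: the \PTAS{} competes against $\sigma^*_\lin$, whereas the theorem measures quality against $\sigma^*$, and the mismatch between these two optima is exactly what the factor $\alpha\beta$ is there to absorb.
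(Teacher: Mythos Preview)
Your proposal is correct and follows essentially the same four-step chain as the paper: use the upper closeness bound on $\hat\sigma$, invoke the linear \PTAS{} guarantee, replace the linear optimum by $\sigma^*$ via optimality, and finish with the lower closeness bound. If anything, your write-up is slightly more explicit than the paper's in justifying why the pointwise sandwich on $\w$ lifts to $\cost_\w$ (via nonnegativity of the counts $f^{\sigma\to\tau}(\ell)$).
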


\begin{remark}
	Notice that the \PTAS{} of \Cref{thm:RecPTAS} is applicable for any \emph{affine} weight function $\waff{} = c \cdot \wlin{} + d$ for some fixed constants $c,d \in \N$. As a result, the approximation guarantee of \Cref{thm:ApproxGeneralWeights} also extends to any weight function that is $(\alpha,\beta)$-close to \emph{some} affine weight function.
\end{remark}

\section{Experimental Results}
\label{sec:exp}
We perform two sets of experiments on Amazon Mechanical Turk (MTurk). The first set of experiments (\Cref{subsec:User_Behavior}) is aimed at identifying the \emph{sorting strategies} of the users as well as a model of their \emph{drag-and-drop behavior}. The observations from these experiments directly motivate the formulation of our theoretical model, which we have already presented in \Cref{sec:results}. The second set of experiments (\Cref{subsec:Expt_Evaluation}) is aimed at \emph{evaluating} the practical usefulness of our approach.

In both sets of experiments, the crowdworkers were asked to sort in increasing order a randomly generated list of numbers between 0 and 100 (the specifics about the length of the lists and how they are generated can be found in \Cref{subsec:User_Behavior,subsec:Expt_Evaluation}). \Cref{fig:mturk} shows an example of the instructions provided to the crowdworkers.

\begin{figure}[ht]
	\centering
	\includegraphics[width=0.6\linewidth]{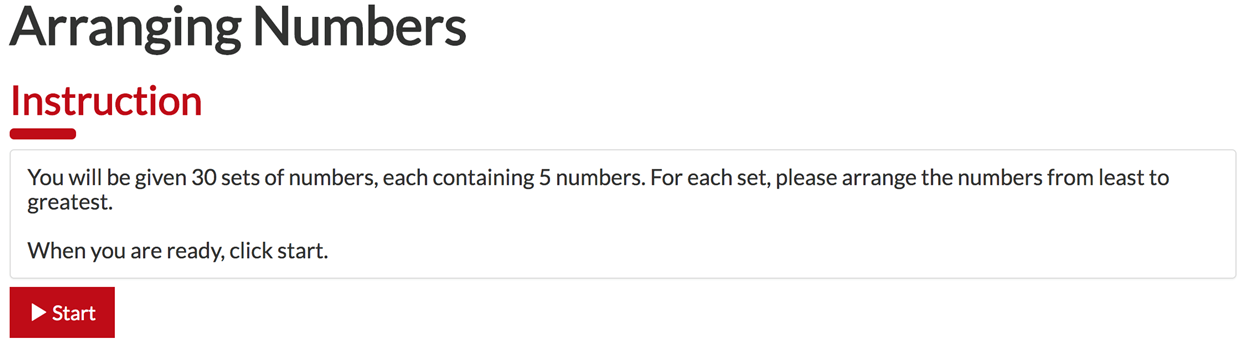}
	\caption{\label{fig:mturk} \small Instructions given to the MTurk workers.} 
\end{figure}

In each experiment, the task length was advertised as 10 minutes, and the payment offered was $\$0.25$ per task. The crowdworkers were provided a user interface (see Figure~\ref{fig:ui}) that allows for drag-and-drop operations. To ensure data quality, we removed those workers from the data who failed to successfully order the integers more than 80\% of the time, or did not complete all the polls. We also removed the workers with high variance in their sorting time; in particular, those with coefficient of variation above the $80^\text{th}$ percentile. The reported results are for the workers whose data was retained.

\subsection{Identifying User Behavior}
\label{subsec:User_Behavior}
To identify user behavior, we performed two experiments: (a) {\sc Rank10}, where each crowdworker participated in 20 polls, each consisting of a list of 10 integers (between $0$ and $100$) generated uniformly at random, and (b) {\sc Rank5}, which is a similar task with 30 polls and lists of length 5. In each poll, we recorded the time taken by a crowdworker to move an alternative (via drag-and-drop operation) and the number of positions by which the alternative was moved. After the initial pruning (as described above), we retained 9840 polls submitted by 492 workers in the {\sc Rank10} experiment, and 10320 polls submitted by 344 workers retained in the {\sc Rank5} experiment. \Cref{tbl:exp} summarizes the aggregate statistics. Our observations are discussed below.

\begin{table*}[ht]
	\centering
	\scriptsize
	\begin{tabular}{|c|c|c|c|c|c|c|}
		\hline
		\multirow{2}{*}{}& \multicolumn{3}{c|}{\sc Rank10} & \multicolumn{3}{c|}{\sc Rank5} \\ \cline{2-7}
		& Mean & Median & Std. Dev. & Mean & Median & Std. Dev. \\ \hline
		Sorting time & 24.41 & 22.65 & 9.12 & 7.75 & 6.99 & 3.54 \\ \hline
		Total number of drag-and-drop operations & 7.69 & 8 & 1.8 & 2.91 & 3 & 1.13 \\ \hline
		Total number of positions moved during drag-and-drop operations & 22.59 & 23 & 5.59 & 5.05 & 5 & 2.01 \\ \hline
		Number of operations coinciding with selection/insertion sort & 5.09 & 6 & 2.28 & 2.21 & 2 & 1.06 \\ \hline
		Kendall's Tau distance between the initial and final rankings & 22.55 & 22 & 5.6 & 5.04 & 5 & 2.01 \\ \hline
	\end{tabular}
	\caption{\label{tbl:exp} Summary of the user statistics recorded in the experiments in \Cref{subsec:User_Behavior}.}
\end{table*}

\paragraph{Sorting Behavior}
Our hypothesis regarding the ranking behavior of human crowdworkers was that they use (some combination of) natural sorting algorithms such as selection sort or insertion sort (\Cref{subsec:Sorting_Algorithms}). To test our hypothesis, we examined the fraction of the drag-and-drop operations that \emph{coincided} with an iteration of selection/insertion sort. (Given a ranking $\sigma$, a drag-and-drop operation on $\sigma$ coincides with selection/insertion sort if the order of alternatives resulting from the drag-and-drop operation exactly matches the order of alternatives when one iteration of either selection or insertion sort is applied on $\sigma$.) We found that, on average, $\frac{2.21}{2.91} = 76\%$ of all drag-and-drop operations in {\sc Rank5} (and $\frac{5.09}{7.69} = 66.2\%$ in the {\sc Rank10}) coincided with selection/insertion sort.

\paragraph{Drag-and-Drop Behavior}
To identify the drag-and-drop behavior of the users, we plot the time-to-rank as a function of the total number of positions by which the alternatives are moved in each poll (Figure~\ref{fig:pos_time}). Recall from \Cref{rem:NumberOfOperations_vs_Distance} that for an ideal user who uses only insertion/selection sort, the latter quantity is equal to $\dkt(\sigma,\tau)$.

\begin{figure}[ht]
	\centering
    \includegraphics[scale=0.35]{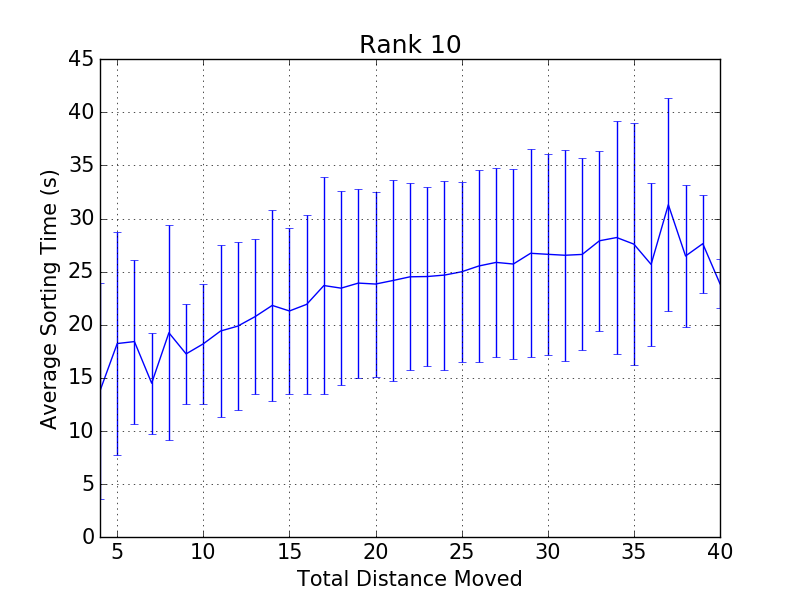}
    \includegraphics[scale=0.35]{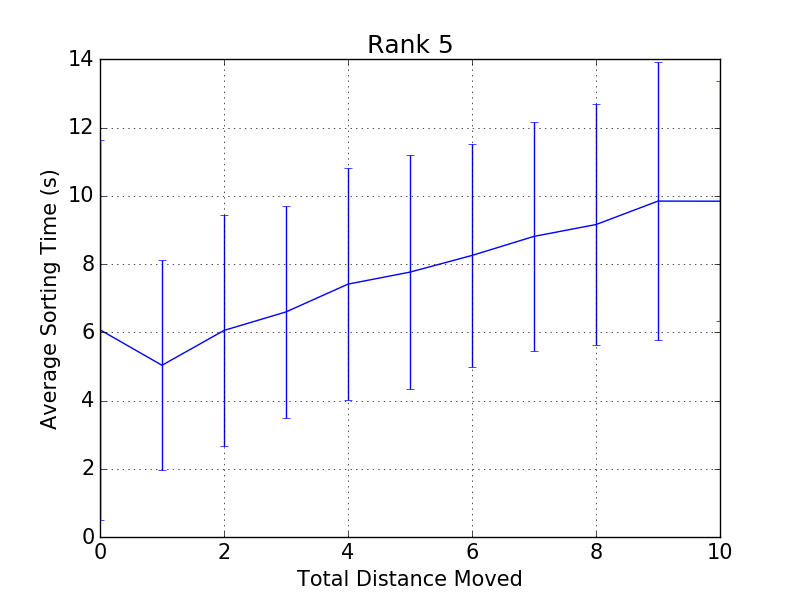}
	\caption{Relationship between the number of positions moved and the total sorting time for {\sc Rank10} (left) and {\sc Rank5} (right).}
	\label{fig:pos_time}
\end{figure}

\begin{table*}[ht]
\centering
\footnotesize
\begin{tabular}{|c|c|c|c|c|c|c|} \hline
\multirow{2}{*}{Dataset} & Avg. MSE & $\sqrt{\text{Avg. MSE}}$ & Avg. Sorting Time & \multicolumn{3}{c|}{Number of users based on their best-fit model}  \\
\cline{5-7}
& (in seconds$^2$) & (in seconds) & (in seconds) & Only $\dkt$ & Only $\si$ & Both $\dkt$ and $\si$ \\ \hline
{\sc Rank10} & 42.98 & 6.56 & 24.41 & 217 & 199 & 76 \\ \hline
{\sc Rank5} & 7.74 & 2.78 & 7.75 & 138 & 180 & 26 \\ \hline
\end{tabular}
\caption{\label{tbl:mse}\footnotesize Average 5-fold cross-validation MSE over all workers using the best model for each worker, and the number of users for which each of the models was identified to be the best. \# moves is the number of times alternatives are moved using selection or insertion sort.}
\end{table*}

Our hypothesis was that the sorting time varies \emph{linearly} with the total number of drag-and-drop operations ($\si$) and the Kendall's Tau distance ($\dkt(\sigma,\tau)$). To verify this, we used linear regression with time-to-rank (or sorting time) as the target variable and measured the mean squared error (MSE) using 5-fold cross-validation for three different choices of independent variables: (1) Only $\dkt{}$, (2) only $\si$, and (3) both $\dkt{}$ and $\si$. For each user, we picked the model with the smallest MSE (see \Cref{tbl:mse} for the resulting distribution of the number of users). We found that the predicted drag-and-drop times (using the best-fit model for each user) are, on average, within $\frac{6.56}{24.41} = 26.8\%$ of the observed times for {\sc Rank10} and within $\frac{2.78}{7.75} = 35.8\%$ for {\sc Rank5}.

\subsection{Evaluating the Proposed Framework}
\label{subsec:Expt_Evaluation}
To evaluate the usefulness of our framework, we compared a random recommendation strategy with one that forms an increasingly accurate estimate of users' preferences with time. Specifically, we first fix the ground truth ranking of $10$ alternatives consisting of randomly generated integers between $0$ and $100$. Each crowdworker then participates in two sets of $10$ polls each. In one set of polls, the crowdworkers are provided with initial rankings generated by adding independent Gaussian noise to the ground truth (to simulate a \emph{random} recommendation strategy), and their sorting times are recorded.

In the second set of polls, the recommended set of alternatives is the same as under the random strategy but ordered order to a \emph{Borda} ranking. Specifically, the ordering in the $k^\text{th}$ iteration is determined by the Borda ranking aggregated from the previous $k-1$ iterations.

\begin{figure}[ht]
	\centering
    \includegraphics[scale=0.35]{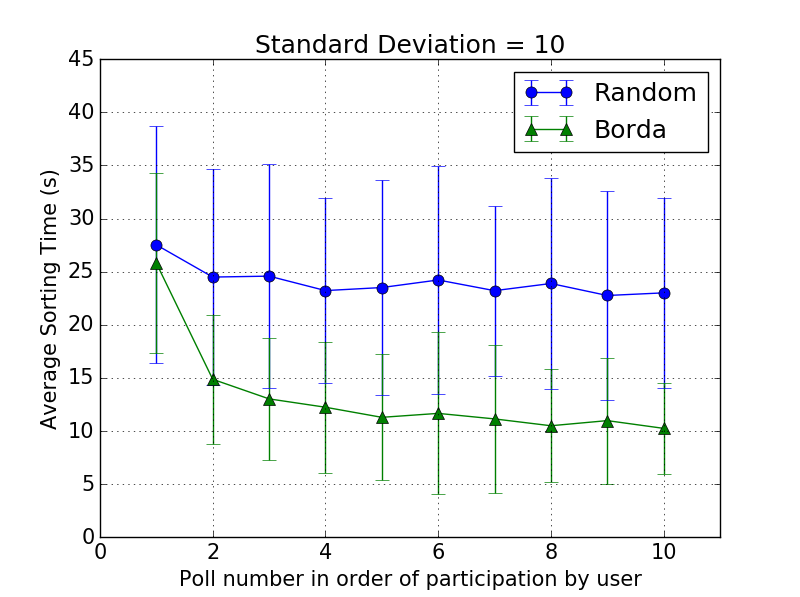}
    \includegraphics[scale=0.35]{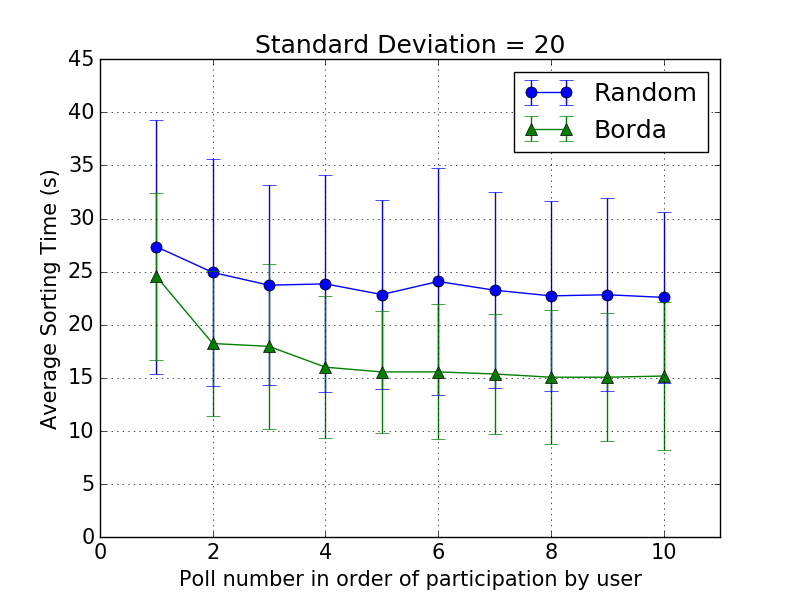}
	\caption{Relationship between sorting time and the number of polls completed by the users for std dev=$10$ (left) and std dev=$20$ (right).}
	\label{fig:reco}
\end{figure}

\Cref{fig:reco} shows the average sorting time of the crowdworkers as a function of the index of the polls under two different noise settings: std. dev. = 10 and std. dev. = 20. We can make two important observations: First, that Borda recommendation strategy (in green) provides a significant reduction in the sorting time of the users compared to the random strategy (in blue). Indeed, the sorting time of the users is reduced by \emph{up to $50\%$}, thus validating the practical usefulness of our framework. The second observation is that the reduction in sorting time is \emph{not} due to increasing familiarity with the interface. This is because the average sorting time for the random strategy remains almost constant throughout the duration of the poll.

\section{Conclusion and Future Work}
We proposed a recommendation framework to minimize time-to-rank. We formulated a theoretical model of the recommendation problem (including \NPH{}ness results and associated approximation algorithms), and illustrated the practical effectiveness of our approach in real-world experiments. 

Our work opens up a number of directions for future research. In terms of theoretical questions, it would be interesting to analyze the complexity of the recommendation problem for other distance measures, e.g., Ulam distance. On the practical side, it would be interesting to analyze the effect of cognitive biases such as the \emph{framing effect} \citep{TK81framing} and \emph{list position bias} \citep{LH14leveraging} on the recommendation problem. 
 Progress in this direction can, in turn, have implications on the \emph{fairness} of recommendation algorithms.

\section*{Acknowledgments}
We are grateful to IJCAI-19 reviewers for their helpful comments. This work is supported by NSF \#1453542 and ONR \#N00014-17-1-2621.

\bibliographystyle{plainnat}
\bibliography{references}

\begin{thebibliography}{36}
\providecommand{\natexlab}[1]{#1}
\providecommand{\url}[1]{\texttt{#1}}
\expandafter\ifx\csname urlstyle\endcsname\relax
  \providecommand{\doi}[1]{doi: #1}\else
  \providecommand{\doi}{doi: \begingroup \urlstyle{rm}\Url}\fi

\bibitem[Agarwal(2016)]{Agarwal2016:On-ranking}
Shivani Agarwal.
\newblock {On Ranking and Choice Models}.
\newblock In \emph{Proceedings of the Twenty-Fifth International Joint
  Conference on Artificial Intelligence}, pages 4050--4053, 2016.

\bibitem[Ailon et~al.(2008)Ailon, Charikar, and Newman]{Ailon08:Aggregating}
Nir Ailon, Moses Charikar, and Alantha Newman.
\newblock {Aggregating Inconsistent Information: Ranking and Clustering}.
\newblock \emph{Journal of the ACM}, 55\penalty0 (5):\penalty0 23, 2008.

\bibitem[Albert and Tullis(2013)]{Albert13:Measuring}
William Albert and Thomas Tullis.
\newblock \emph{{Measuring the User Experience: Collecting, Analyzing, and
  Presenting Usability Metrics}}.
\newblock Newnes, 2013.

\bibitem[Alon(2006)]{Alon06:Ranking}
Noga Alon.
\newblock {Ranking Tournaments}.
\newblock \emph{SIAM Journal on Discrete Mathematics}, 20\penalty0
  (1):\penalty0 137--142, 2006.

\bibitem[Awasthi et~al.(2014)Awasthi, Blum, Sheffet, and
  Vijayaraghavan]{Awasthi14:Learning}
Pranjal Awasthi, Avrim Blum, Or~Sheffet, and Aravindan Vijayaraghavan.
\newblock {Learning Mixtures of Ranking Models}.
\newblock In \emph{Proceedings of Advances in Neural Information Processing
  Systems}, pages 2609--2617, 2014.

\bibitem[Bevan et~al.(2015)Bevan, Carter, and Harker]{Bevan15:ISO}
Nigel Bevan, James Carter, and Susan Harker.
\newblock {ISO 9241-11 revised: What have we Learnt about Usability Since
  1998?}
\newblock In \emph{International Conference on Human-Computer Interaction},
  pages 143--151. Springer, 2015.

\bibitem[Blum et~al.(2004)Blum, Jackson, Sandholm, and
  Zinkevich]{Blum04:Preference}
Avrim Blum, Jeffrey Jackson, Tuomas Sandholm, and Martin Zinkevich.
\newblock {Preference Elicitation and Query Learning}.
\newblock \emph{Journal of Machine Learning Research}, 5:\penalty0 649--667,
  2004.

\bibitem[Boutilier(2013)]{Boutilier13:Computational}
Craig Boutilier.
\newblock {Computational Decision Support: Regret-Based Models for Optimization
  and Preference Elicitation}.
\newblock In P.~H. Crowley and T.~R. Zentall, editors, \emph{{Comparative
  Decision Making: Analysis and Support Across Disciplines and Applications}}.
  Oxford University Press, 2013.

\bibitem[Braverman and Mossel(2008)]{Braverman08:Noisy}
Mark Braverman and Elchanan Mossel.
\newblock {Noisy Sorting Without Resampling}.
\newblock In \emph{Proceedings of the Nineteenth Annual ACM-SIAM Symposium on
  Discrete Algorithms}, pages 268--276, 2008.

\bibitem[Busa-Fekete et~al.(2014)Busa-Fekete, H{\"u}llermeier, and
  Sz{\"o}r{\'e}nyi]{Busa-Fekete2014:Preference-based}
R{\'o}bert Busa-Fekete, Eyke H{\"u}llermeier, and Bal{\'a}zs Sz{\"o}r{\'e}nyi.
\newblock {Preference-Based Rank Elicitation using Statistical Models: The Case
  of Mallows}.
\newblock In \emph{Proceedings of the 31st International Conference on
  International Conference on Machine Learning}, pages II:1071--1079, 2014.

\bibitem[Charon and Hudry(2010)]{Charon10:Updated}
Ir\`{e}ne Charon and Olivier Hudry.
\newblock {An Updated Survey on the Linear Ordering Problem for Weighted or
  Unweighted Tournaments}.
\newblock \emph{Annals of Operations Research}, 175\penalty0 (1):\penalty0
  107--158, 2010.

\bibitem[Cheng et~al.(2015)Cheng, Teevan, and Bernstein]{Cheng15:Measuring}
Justin Cheng, Jaime Teevan, and Michael~S Bernstein.
\newblock {Measuring Crowdsourcing Effort with Error-Time Curves}.
\newblock In \emph{Proceedings of the 33rd Annual ACM Conference on Human
  Factors in Computing Systems}, pages 1365--1374, 2015.

\bibitem[Conen and Sandholm(2001)]{Conen01:Minimal}
Wolfram Conen and Tuomas Sandholm.
\newblock {Minimal Preference Elicitation in Combinatorial Auctions}.
\newblock In \emph{IJCAI-2001 Workshop on Economic Agents, Models, and
  Mechanisms}, pages 71--80, 2001.

\bibitem[Conitzer(2006)]{Conitzer06:Slater}
Vincent Conitzer.
\newblock {Computing Slater Rankings Using Similarities among Candidates}.
\newblock In \emph{Proceedings of the 21st National Conference on Artificial
  Intelligence}, volume~1, pages 613--619, 2006.

\bibitem[Conitzer and Sandholm(2002)]{Conitzer02:Elicitation}
Vincent Conitzer and Tuomas Sandholm.
\newblock {Vote Elicitation: Complexity and Strategy-Proofness}.
\newblock In \emph{Eighteenth National Conference on Artificial Intelligence},
  pages 392--397, 2002.

\bibitem[Coppersmith et~al.(2010)Coppersmith, Fleischer, and
  Rurda]{CFR10ordering}
Don Coppersmith, Lisa~K Fleischer, and Atri Rurda.
\newblock {Ordering by Weighted Number of Wins Gives a Good Ranking for
  Weighted Tournaments}.
\newblock \emph{ACM Transactions on Algorithms (TALG)}, 6\penalty0
  (3):\penalty0 55, 2010.

\bibitem[Dwork et~al.(2001)Dwork, Kumar, Naor, and Sivakumar]{Dwork01:Rank}
Cynthia Dwork, Ravi Kumar, Moni Naor, and D.~Sivakumar.
\newblock {Rank Aggregation Methods for the Web}.
\newblock In \emph{Proceedings of the 10th World Wide Web Conference}, pages
  613--622, 2001.

\bibitem[Guiver and Snelson(2009)]{Guiver09:Bayesian}
John Guiver and Edward Snelson.
\newblock {Bayesian Inference for Plackett-Luce Ranking Models}.
\newblock In \emph{Proceedings of the 26th Annual International Conference on
  Machine Learning}, ICML-09, pages 377--384, Montreal, Quebec, Canada, 2009.

\bibitem[Kendall(1938)]{K38new}
Maurice~G Kendall.
\newblock {A New Measure of Rank Correlation}.
\newblock \emph{Biometrika}, 30\penalty0 (1/2):\penalty0 81--93, 1938.

\bibitem[Kenyon-Mathieu and Schudy(2007)]{Kenyon07:How}
Claire Kenyon-Mathieu and Warren Schudy.
\newblock {How to Rank with Few Errors: A PTAS for Weighted Feedback Arc Set on
  Tournaments}.
\newblock In \emph{Proceedings of the Thirty-Ninth Annual ACM Symposium on
  Theory of Computing}, pages 95--103, 2007.

\bibitem[Khetan and Oh(2016)]{Khetan2016:Data-driven}
Ashish Khetan and Sewoong Oh.
\newblock {Data-Driven Rank Breaking for Efficient Rank Aggregation}.
\newblock \emph{Journal of Machine Learning Research}, 17\penalty0
  (193):\penalty0 1--54, 2016.

\bibitem[Laskowski et~al.(2004)Laskowski, Autry, Cugini, and
  Killam]{Laskowski04:Improving}
Sharon~J Laskowski, Marguerite Autry, John Cugini, and William Killam.
\newblock {Improving the Usability and Accessibility of Voting Systems and
  Products}.
\newblock \emph{NIST Special Publication}, 500:\penalty0 256, 2004.

\bibitem[Lerman and Hogg(2014)]{LH14leveraging}
Kristina Lerman and Tad Hogg.
\newblock {Leveraging Position Bias to Improve Peer Recommendation}.
\newblock \emph{PloS one}, 9\penalty0 (6):\penalty0 e98914, 2014.

\bibitem[Liu(2011)]{Liu11:Learning}
Tie-Yan Liu.
\newblock \emph{Learning to Rank for Information Retrieval}.
\newblock Springer, 2011.

\bibitem[Lu and Boutilier(2014)]{Lu2014:Effective}
Tyler Lu and Craig Boutilier.
\newblock {Effective Sampling and Learning for Mallows Models with
  Pairwise-Preference Data}.
\newblock \emph{Journal of Machine Learning Research}, 15:\penalty0 3963--4009,
  2014.

\bibitem[Luce(1959)]{Luce59:Individual}
Robert~Duncan Luce.
\newblock \emph{{Individual Choice Behavior: A Theoretical Analysis}}.
\newblock Wiley, 1959.

\bibitem[Maddalena et~al.(2016)Maddalena, Basaldella, De~Nart, Degl'Innocenti,
  Mizzaro, and Demartini]{Maddalena16:Crowdsourcing}
Eddy Maddalena, Marco Basaldella, Dario De~Nart, Dante Degl'Innocenti, Stefano
  Mizzaro, and Gianluca Demartini.
\newblock {Crowdsourcing Relevance Assessments: The Unexpected Benefits of
  Limiting the Time to Judge}.
\newblock In \emph{Fourth AAAI Conference on Human Computation and
  Crowdsourcing}, 2016.

\bibitem[Mallows(1957)]{Mallows57:Non-null}
Colin~L. Mallows.
\newblock {Non-Null Ranking Model}.
\newblock \emph{Biometrika}, 44\penalty0 (1/2):\penalty0 114--130, 1957.

\bibitem[Negahban et~al.(2017)Negahban, Oh, and Shah]{Negahban17:Rank}
Sahand Negahban, Sewoong Oh, and Devavrat Shah.
\newblock {Rank Centrality: Ranking from Pairwise Comparisons}.
\newblock \emph{Operations Research}, 65\penalty0 (1):\penalty0 266--287, 2017.

\bibitem[Plackett(1975)]{Plackett75:Analysis}
Robin~L. Plackett.
\newblock {The Analysis of Permutations}.
\newblock \emph{Journal of the Royal Statistical Society. Series C (Applied
  Statistics)}, 24\penalty0 (2):\penalty0 193--202, 1975.

\bibitem[Procaccia and Shah(2016)]{Procaccia16:Optimal}
Ariel~D Procaccia and Nisarg Shah.
\newblock {Optimal Aggregation of Uncertain Preferences}.
\newblock In \emph{Thirtieth AAAI Conference on Artificial Intelligence}, pages
  608--614, 2016.

\bibitem[Soufiani et~al.(2013)Soufiani, Parkes, and Xia]{Azari13:Preference}
Hossein~Azari Soufiani, David~C Parkes, and Lirong Xia.
\newblock {Preference Elicitation For General Random Utility Models}.
\newblock In \emph{Proceedings of the Twenty-Ninth Conference on Uncertainty in
  Artificial Intelligence}, pages 596--605, 2013.

\bibitem[Tversky and Kahneman(1981)]{TK81framing}
Amos Tversky and Daniel Kahneman.
\newblock {The Framing of Decisions and the Psychology of Choice}.
\newblock \emph{Science}, 211\penalty0 (4481):\penalty0 453--458, 1981.

\bibitem[Zhao and Xia(2018)]{Zhao2018:Composite}
Zhibing Zhao and Lirong Xia.
\newblock {Composite Marginal Likelihood Methods for Random Utility Models}.
\newblock In \emph{Proceedings of the 35th International Conference on Machine
  Learning}, volume~80 of \emph{Proceedings of Machine Learning Research},
  pages 5922--5931, 2018.

\bibitem[Zhao et~al.(2016)Zhao, Piech, and Xia]{Zhao16:Learning}
Zhibing Zhao, Peter Piech, and Lirong Xia.
\newblock {Learning Mixtures of Plackett-Luce Models}.
\newblock In \emph{Proceedings of the 33rd International Conference on Machine
  Learning (ICML-16)}, 2016.

\bibitem[Zhao et~al.(2018)Zhao, Li, Wang, Kephart, Mattei, Su, and
  Xia]{Zhao2018:A-Cost-Effective}
Zhibing Zhao, Haoming Li, Junming Wang, Jeffrey Kephart, Nicholas Mattei, Hui
  Su, and Lirong Xia.
\newblock {A Cost-Effective Framework for Preference Elicitation and
  Aggregation}.
\newblock In \emph{Proceedings of the 2018 Conference on Uncertainty in
  Artificial Intelligence (UAI)}, 2018.

\end{thebibliography}

\clearpage
\newpage
\section{Appendix}

\subsection{Additional Preliminaries}
\label{subsec:Appendix_Preliminaries}

The pairwise marginal distribution for the $k$-mixture Plackett-Luce model is given by
\begin{align}
 \Pr_{\sigma \sim \kPL} (a_i \>_\sigma a_j) = \sum_{\ell=1}^k \gamma^{(\ell)} \cdot \frac{\theta^{(\ell)}_i}{\theta^{(\ell)}_i + \theta^{(\ell)}_j}.
\label{eqn:PairwiseMarginalkPL}
\end{align}

\begin{prop}[\citealp{Mallows57:Non-null}]
	Let $\sigma^*,\phi$ be the parameters of a Mallows model (\MM{}), and let $a_i,a_j \in A$ be such that $a_i \>_{\sigma^*} a_j$. Let $\Delta = \rank(\sigma^*,a_j) - \rank(\sigma^*,a_i)$. Then,
	\begin{align*}
	\Pr_{\sigma \sim (\sigma^*,\phi)} (a_i \>_\sigma a_j) = \frac{\sum_{z=1}^{\Delta} z \phi^{z-1}}{\left( \sum_{z=0}^{\Delta-1} \phi^{z} \right) \left( \sum_{z=0}^{\Delta} \phi^{z} \right)}.
	\end{align*}
	\label{prop:MM-pairwise}
\end{prop}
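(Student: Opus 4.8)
The plan is to relabel so that $\sigma^{*}=(a_1,\dots,a_m)$ is the identity, so that $\dkt(\sigma,\sigma^{*})$ counts inversions relative to the identity and $a_i,a_j$ become two alternatives whose reference ranks differ by $\Delta$. Write $S_t \coloneqq \sum_{z=0}^{t}\phi^{z}$. The engine of the whole argument is the \emph{repeated insertion} representation of the Mallows model: a sample $\sigma$ can be generated by inserting the alternatives one at a time; when an element is inserted on top of $p$ already-placed elements, the number $U$ of them that end up to its left is an independent random variable with $\Pr(U=u)=\phi^{u}/S_{p}$ for $u\in\{0,\dots,p\}$, and each such left-jump contributes exactly one inversion. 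This is the standard fact behind the factorization $Z=\prod_{k=1}^{m}S_{k-1}$, and the independence of the insertion variables is what makes the marginal tractable.

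Next I would show that only the contiguous block $B=\{a_i,a_{i+1},\dots,a_j\}$ of $N\coloneqq\Delta+1$ consecutive reference elements matters, by establishing that the induced relative order of $B$ is itself Mallows with the \emph{same} dispersion $\phi$. This follows by stripping extreme elements one at a time: inserting top-to-bottom (forward) places the reference-bottom element last, so marginalizing it out leaves a forward-insertion Mallows on the remaining top elements; inserting bottom-to-top (reverse) places the reference-top element last, so marginalizing it out leaves a Mallows on the remaining bottom elements. Iterating removes everything ranked above $a_i$ and below $a_j$, leaving a Mallows over $B$ with dispersion $\phi$. This reduces the proposition to a single computation: in an $N$-element Mallows with identity reference, find the probability that the top element $a_i$ precedes the bottom element $a_j$.

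For that computation I would use the reverse insertion process. The $N-1$ elements other than $a_i$ are inserted first and form a Mallows sub-sample in which the bottom element $a_j$ occupies position $q$ with probability $\phi^{\,N-1-q}/S_{N-2}$; then $a_i$ is inserted last with $U=u$ elements to its left, $\Pr(U=u)=\phi^{u}/S_{N-1}$, independently of the sub-sample. Since $a_i\succ_\sigma a_j$ exactly when $a_j$ is not among those $u$ left-neighbours, i.e.\ when $a_j$'s position exceeds $u$, independence gives
\begin{align*}
\Pr(a_i\succ_\sigma a_j)=\frac{1}{S_{N-1}S_{N-2}}\sum_{u=0}^{N-1}\phi^{u}\sum_{q=u+1}^{N-1}\phi^{\,N-1-q}=\frac{\sum_{z=1}^{\Delta}z\,\phi^{z-1}}{S_{\Delta}\,S_{\Delta-1}}.
\end{align*}
The middle expression collapses by re-indexing the inner sum as $S_{N-2-u}$ and substituting $s=u+r$, which turns the numerator into $\sum_{s=0}^{N-2}(s+1)\phi^{s}=\sum_{z=1}^{\Delta}z\,\phi^{z-1}$; using $S_{N-1}=S_{\Delta}$ and $S_{N-2}=S_{\Delta-1}$ yields exactly the claimed formula.

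The conceptual crux is the combination of the insertion representation with the contiguous-projection reduction, since together they are what explain the non-obvious fact that the pairwise marginal depends on $a_i,a_j$ only through the gap $\Delta$. The point needing the most care is the reverse-insertion step: verifying that the position law of the bottom element of the sub-sample is independent of the last-inserted top element and that ``$a_i$ precedes $a_j$'' translates correctly into a threshold on $a_j$'s position. Once those are in place, the remaining work—the geometric-series bookkeeping collapsing the double sum into $\sum_{z=1}^{\Delta}z\phi^{z-1}$, and the base check $\Delta=1$ giving $1/(1+\phi)$—is entirely routine.
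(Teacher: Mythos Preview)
Your argument is correct. The repeated-insertion representation, the contiguous-block reduction (peeling off the top via reverse RIM and the bottom via forward RIM), and the final double-sum collapse all go through as you describe; the independence you flag as the delicate point is exactly the independence of the insertion displacements in RIM, transported to the reverse process by the $\sigma\mapsto\text{rev}(\sigma)$ symmetry of Mallows.

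There is nothing to compare against in the paper: the proposition is stated in the appendix as a cited result from \citet{Mallows57:Non-null} and is not proved there. So your write-up supplies an argument the paper simply imports. One small presentational remark: you could shorten the reduction step by noting that the contiguous-projection property follows in one stroke from either direction of RIM alone (forward RIM already shows that the relative order of $\{a_1,\dots,a_j\}$ is Mallows, and within \emph{that} model forward RIM again shows the relative order of any prefix is Mallows; then apply the reversal symmetry once), but the two-sided peeling you wrote is perfectly valid.
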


The pairwise marginal for a $k$-mixture Mallows model, parameterized by $\{\gamma^{(\ell)},\sigma^*_{(\ell)},\phi_{(\ell)}\}_{\ell=1}^k$, can be derived similarly. Fix a pair $a_i,a_j \in A$. For each $\ell \in [k]$, let $\Delta^{i,j}_\ell \coloneqq \rank(\sigma^*_{(\ell)},a_j) - \rank(\sigma^*_{(\ell)},a_i)$. Define the function $g_{\ell}: \Z \setminus \{0\} \rightarrow \R_{\geq 0}$ as
\[ g_{\ell}(\Delta) \coloneqq
\begin{cases}
\frac{\sum_{z=1}^{\Delta} z \phi_{(\ell)}^{z-1}}{\left( \sum_{z=0}^{\Delta-1} \phi_{(\ell)}^{z} \right) \left( \sum_{z=0}^{\Delta} \phi_{(\ell)}^{z} \right)}       & \quad \text{if } \Delta > 0,\\
1-\frac{\sum_{z=1}^{|\Delta|} z \phi_{(\ell)}^{z-1}}{\left( \sum_{z=0}^{|\Delta|-1} \phi_{(\ell)}^{z} \right) \left( \sum_{z=0}^{|\Delta|} \phi_{(\ell)}^{z} \right)}  & \quad \text{if } \Delta < 0.
\end{cases}
\]
Thus, $g_{\ell}(\Delta^{i,j}_{\ell})$ is the pairwise marginal probability induced by the $\ell^{\text{th}}$ mixture, i.e.,
\begin{align*}
    g_{\ell}(\Delta^{i,j}_{\ell}) = \Pr_{\sigma \sim ( \sigma^*_{(\ell)},\phi_{(\ell)} )} (a_i \>_{\sigma} a_j).
\end{align*}

The pairwise marginal for the $\kMM$ model is given by
\begin{align}
\Pr_{\sigma \sim \kMM} (a_i \>_\sigma a_j) = \sum_{\ell=1}^k \gamma^{(\ell)} g_{\ell}(\Delta^{i,j}_{\ell}).
\label{eqn:PairwiseMarginalkMM}
\end{align}

\subsection{Relevant Computational Problems}
\label{subsec:Computational_Problems}

\begin{definition}[\Kem]
	Given a preference profile $\{\sigma^{(i)}\}_{i=1}^n$ and a number $\delta \in \Q$, does there exist $\sigma \in \ml(A)$ such that $\sum_{i=1}^n \dkt(\sigma,\sigma^{(i)}) \leq \delta$?
\end{definition}

\Kem{} is known to be \NPC{} even for $n = 4$ \citep{Dwork01:Rank}.

\begin{definition}[\WFAST{} (\wfast{})]
	Given a complete directed graph $G = (V,E)$, a set of non-negative edge weights $\{w_{i,j},w_{j,i}\}_{(i,j) \in E}$ where $w_{i,j} + w_{j,i} = b$ for some fixed constant $b \in (0,1]$, and a number $\delta \in \Q$, does there exist $\sigma \in \ml(A)$ such that $\sum_{i,j \in V} w_{j,i} \cdot \Ind[a_i \>_{\sigma} a_j] \leq \delta$?
\end{definition}

\wfast{} is known to be \NPC{} even when $w_{i,j} = 1$ if $(i,j) \in E$ and $0$ otherwise \citep{Ailon08:Aggregating,Alon06:Ranking,Conitzer06:Slater,Charon10:Updated}. A polynomial-time approximation scheme (\PTAS{}) for \wfast{} is also known \citep{Kenyon07:How}. \Cref{prop:KendallTauPTAS} recalls this result.

\begin{prop}[\citealp{Kenyon07:How}]
	Given any $\varepsilon > 0$ and an instance of \wfast{}, there exists an algorithm that runs in time ${|V|}^{2^{\Otilde(1/\varepsilon)}}$ and returns a linear order $\sigma$ such that
	$$\sum_{i,j \in V} w_{j,i} \cdot \Ind[a_i \>_{\sigma} a_j] \leq (1+\varepsilon) \sum_{i,j \in V} w_{j,i} \cdot \Ind[a_i \>_{\sigma^*} a_j],$$
	where $\sigma^* \in \arg\min_{\tau \in \ml(A)} \sum_{i,j \in V} w_{j,i} \cdot \Ind[a_i \>_{\tau} a_j]$.
	\label{prop:KendallTauPTAS}
\end{prop}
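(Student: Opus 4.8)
This is precisely the weighted-tournament PTAS established in the cited work~\citep{Kenyon07:How}, so I would reconstruct its two-stage strategy: first design an \emph{additive} approximation scheme, and then bootstrap it into the claimed \emph{multiplicative} scheme via recursion. Write $\textup{OPT}$ for the optimal backward weight $\sum_{i,j \in V} w_{j,i} \Ind[a_i \succ_{\sigma^*} a_j]$, and observe that since $w_{i,j} + w_{j,i} = b$, every pair contributes at most $b$ to any ordering's cost, so the total $\binom{|V|}{2} b$ is the natural additive scale: any ordering has cost in $[0, \binom{|V|}{2} b]$.

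\textbf{Step 1: additive scheme.} The first goal is a routine that, for a parameter $\delta > 0$, returns an ordering of cost at most $\textup{OPT} + \delta \binom{|V|}{2} b$ in time $|V|^{\poly(1/\delta)}$. Because the instance is a complete tournament, it is ``dense,'' and additive approximation is obtainable by sampling in the spirit of dense-CSP additive schemes. Concretely, draw a random sample $S$ of $\poly(1/\delta)$ vertices, exhaustively guess a coarse skeleton --- the relative order of $S$ together with a bucketing of all positions into $O(1/\delta)$ intervals --- and then assign each remaining vertex to the interval minimizing its estimated backward weight against $S$. A Hoeffding bound shows that, for the guess matching $\sigma^*$, each vertex's estimated placement cost is within the additive budget of its true cost, so summing over vertices yields total additive error $\delta \binom{|V|}{2} b$; a final sweep of single-vertex moves (relocating each vertex to its locally optimal slot) can only decrease the cost.

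\textbf{Step 2: bootstrapping to a multiplicative PTAS.} The additive scheme already certifies a $(1+\varepsilon)$-approximation whenever $\textup{OPT} \geq \varepsilon^{-1} \delta \binom{|V|}{2} b$, i.e.\ whenever the optimum is a constant fraction of the total. The substance lies in the sub-quadratic regime. Here I would recurse: run the additive scheme to obtain an ordering $\pi$, split $V$ into a prefix block and a suffix block at a cut of $\pi$, recurse on each block, and concatenate. The cost then decomposes into the two within-block costs (handled recursively) plus the cost of cross-block pairs, which is fixed by the split. The lemma to establish is that a good cut makes the cross-block cost charge against $\textup{OPT}$, and that within each block the \emph{local} optimum is large relative to the block's total, so the additive scheme applied at the leaves is multiplicatively accurate there. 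Brute-forcing blocks of constant size handles the base case.

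\textbf{Main obstacle and running time.} The crux is exactly this recursive amortization in the small-$\textup{OPT}$ regime: one must show that the additive errors charged across all $O(\log |V|)$ recursion levels telescope to $\varepsilon \cdot \textup{OPT}$ rather than to $\varepsilon \binom{|V|}{2} b$, which requires choosing cut points so that each sub-instance concentrates a constant fraction of its local optimum. Distributing the additive budget $\delta$ across the recursion so that the accumulated error stays proportional to $\textup{OPT}$ is what forces $\delta$ as small as $2^{-\Otilde(1/\varepsilon)}$, and the exhaustive skeleton-guessing at that granularity is what yields the stated running time $|V|^{2^{\Otilde(1/\varepsilon)}}$.
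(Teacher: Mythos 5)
The paper offers no proof of this statement: Proposition~\ref{prop:KendallTauPTAS} is imported as a black box from \citet{Kenyon07:How} and invoked as a subroutine in the proofs of \Cref{thm:RecPTAS,thm:BordaApproxAlgo,thm:ApproxGeneralWeights}, so there is no in-paper argument to compare yours against. Judged against the cited source, your reconstruction is architecturally faithful: the Kenyon--Mathieu PTAS does combine an Arora--Frieze--Kaplan-style additive sampling scheme with single-vertex-move local optimization and a divide-and-conquer recursion that cuts a locally optimal order and charges the cross-block cost against $\textup{OPT}$, and the running time $|V|^{2^{\Otilde(1/\varepsilon)}}$ does stem from having to take the additive parameter exponentially small in $1/\varepsilon$. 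Two caveats are worth flagging. First, what you have is an outline rather than a proof: the two load-bearing lemmas you name --- that the cross-block cost of a good cut is bounded by a small multiple of $\textup{OPT}$, and that the per-level additive errors telescope against $\textup{OPT}$ rather than against $\binom{|V|}{2}\, b$ --- are precisely the hard content of the cited paper and are asserted here, not established. Second, in the actual argument the single-vertex-move local optimality is not merely a final clean-up pass inside the additive scheme; it is the structural hypothesis driving the small-$\textup{OPT}$ case, since local optimality is what forces every vertex to sit near its position in the optimal order and thereby makes the cut-and-recurse analysis go through. For this paper's purposes, which use the proposition only as a cited result, your proposal is at an appropriate level of detail and contains no step that would fail.
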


When $b=1$, \wfast{} admits a $5$-approximation algorithm based on the Borda count voting rule (i.e., ordering the vertices in increasing order of their weighted indegrees).

\begin{prop}[\citealp{CFR10ordering}]
	There is a polynomial-time algorithm that, given any instance of \wfast{} with $b=1$, returns a linear order $\sigma$ such that
	\begin{align*}
	\sum_{i,j \in V} w_{j,i} \cdot \Ind[a_i \>_{\sigma} a_j] \leq 5 \cdot \sum_{i,j \in V} w_{j,i} \cdot \Ind[a_i \>_{\sigma^*} a_j],
	\end{align*}
	where $\sigma^* \in \arg\min_{\tau \in \ml(A)} \sum_{i,j \in V} w_{j,i} \cdot \Ind[a_i \>_{\tau} a_j]$.
	\label{prop:BordaApprox}
\end{prop}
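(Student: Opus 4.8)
This is a known result, so the plan is to reconstruct the argument of \citet{CFR10ordering}. Write $C(\sigma) \coloneqq \sum_{i,j \in V} w_{ji}\,\Ind[a_i \succ_\sigma a_j]$ for the objective, and let $d_i \coloneqq \sum_{j \ne i} w_{ji}$ denote the weighted indegree of $i$; the Borda order $\sigma$ lists the vertices in increasing order of $d_i$. For a pair ordered with $a_i \succ_\sigma a_j$ the algorithm pays $w_{ji}$, whereas the best any order can do on this isolated pair is $\min(w_{ij},w_{ji})$. Since $w_{ij}+w_{ji}=1$, the only pairs on which the algorithm overpays are the \emph{misordered} ones, where $a_i \succ_\sigma a_j$ yet $w_{ji} > \tfrac12$; on such a pair the \emph{excess} paid over the pairwise optimum is $w_{ji}-w_{ij}=2w_{ji}-1$. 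The whole task reduces to bounding the total excess by a constant times $C(\sigma^*)$.

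First I would record two lower bounds on $C(\sigma^*)$. The trivial per-pair bound gives $C(\sigma^*) \ge \sum_{\{i,j\}} \min(w_{ij},w_{ji})$, but this alone cannot control the excess, which is why triangles are needed. The second is the \emph{triangle} bound: for any triple $\{i,j,k\}$ whose majority relations form a directed $3$-cycle, every linear order must place at least one of the three pairs against its majority, so it pays at least the smallest of the three ``backward'' weights on that triple; summing a suitably normalized version of these triangle costs lower-bounds $C(\sigma^*)$.

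The crux is to connect the excess of the Borda order to these triangle costs. For a misordered pair $(i,j)$ the indegree rule guarantees $d_i \le d_j$, and expanding this inequality gives $\sum_{k \ne i,j}(w_{kj}-w_{ki}) \ge w_{ji}-w_{ij} > 0$. Thus the excess of the pair is witnessed by a surplus in the third-party comparisons: there must be enough vertices $k$ that beat $j$ more than they beat $i$, and for each such $k$ the triple $\{i,j,k\}$ is cyclic in a way that forces $\sigma^*$ to incur cost. I would then charge the excess $w_{ji}-w_{ij}$ of each misordered pair to the triangle costs of these witness triples, showing that every unit of triangle cost in $C(\sigma^*)$ is charged only a bounded number of times. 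Carefully balancing the charges against the two lower bounds on $C(\sigma^*)$ is exactly where the constant $5$ emerges.

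The main obstacle is this final charging/averaging step: one must simultaneously exploit the pairwise bound and the triangle bound and argue that the surplus $\sum_k (w_{kj}-w_{ki})$ is concentrated on triples that are genuinely expensive for $\sigma^*$, so that the total charge never exceeds $5\,C(\sigma^*)$. Since the statement is quoted verbatim from \citet{CFR10ordering}, in the paper itself one simply cites their theorem rather than reproving it; the sketch above indicates how the $5$-approximation guarantee of the weighted-indegree (Borda) ordering is established.
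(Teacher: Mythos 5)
The paper does not prove this proposition at all: it is imported verbatim from \citet{CFR10ordering} as a black box (\cref{prop:BordaApprox} in \cref{subsec:Computational_Problems}), so your bottom line --- cite the theorem rather than reprove it --- is exactly what the paper does, and for the purposes of this paper your proposal is adequate.

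One caution about your reconstruction sketch, though, in case you ever try to fill it in: the step asserting that each witness $k$ with $w_{kj}-w_{ki}>0$ makes the triple $\{i,j,k\}$ ``cyclic in a way that forces $\sigma^*$ to incur cost'' is false as stated. A misordered pair has $w_{ji}>\tfrac12$ (majority wants $j$ ahead of $i$), but a third vertex $k$ can satisfy $w_{kj}>w_{ki}$ while the majority relations on $\{i,j,k\}$ are transitive (e.g., majority edges $k \to j$, $k \to i$, $j \to i$), in which case $\sigma^*$ can order the triple paying only the small backward weights and your triangle lower bound contributes nothing for that witness. The actual argument of \citet{CFR10ordering} does not route the surplus $\sum_k (w_{kj}-w_{ki}) \geq w_{ji}-w_{ij}$ through cyclic triples; it charges it directly against the cost any order (in particular $\sigma^*$) must pay on the pairs $(k,i)$ and $(k,j)$, combined with the trivial per-pair bound $\min(w_{ij},w_{ji})$, and the constant $5$ comes out of balancing those two charges. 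So the scaffolding you set up (indegree inequality, excess $2w_{ji}-1$, two lower bounds on $C(\sigma^*)$) is the right one, but the charging step as you sketched it would not go through without this repair.
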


\subsection{Proof of Lemma~\ref{lem:EquivalenceOfSortingAlgorithms}}
\label{subsec:Proof_EquivalenceOfSortingAlgorithms}

\EquivalenceOfSortingAlgorithms*
\begin{proof}
We will prove \Cref{lem:EquivalenceOfSortingAlgorithms} via induction on the number of alternatives $m$. The base case of $m=1$ is trivial. Suppose the lemma holds for all alternative sets of size $m \leq n-1$. We will show that the lemma also holds for $m=n$.

Let $\sigma, \tau$ be any two linear orders over the same set of $n$ alternatives, namely $A$. Let $a \coloneqq \tau(1)$ be the most preferred alternative under $\tau$, and let $a$ be ranked $k^\text{th}$ under $\sigma$, i.e., $\sigma(k) = a$. Let $\sigma_{-a}$ and $\tau_{-a}$ denote the truncated linear orders obtained by dropping the alternative $a$ from $\sigma$ and $\tau$ respectively.

We will show that for any sorting algorithm $\A \in \mathfrak{A}$, the following conditions hold:

$\text{ If } k = n, \text{ then } $
\begin{equation}
f_{\A}^{\sigma \rightarrow \tau}(\ell) =
  \begin{cases}
    f_{\A}^{\sigma_{-a} \rightarrow \tau_{-a}}(\ell) & \text{ for all } \ell \in [n-2], \text{ and }\\
    1 & \text{ for } \ell = n-1;
  \end{cases}
\label{eqn:Claim1}
\end{equation}

$\text{ and if } k < n, \text{ then } $
\begin{equation}
f_{\A}^{\sigma \rightarrow \tau}(\ell) =
  \begin{cases}
    f_{\A}^{\sigma_{-a} \rightarrow \tau_{-a}}(\ell) & \text{ for all } \ell \in [n-2] \setminus \{k-1\},\\
    f_{\A}^{\sigma_{-a} \rightarrow \tau_{-a}}(\ell) + 1 & \text{ for } \ell = k-1, \text{ and }\\
    0 & \text{ for } \ell = n-1.
  \end{cases}
\label{eqn:Claim2}
\end{equation}

Note that the claims in \Cref{eqn:Claim1,eqn:Claim2} suffice to prove the lemma: Indeed, $\sigma_{-a}$ and $\tau_{-a}$ are valid linear orders over the same set of $(n-1)$ alternatives, namely $A \setminus \{a\}$. Therefore, by the induction hypothesis, we have that for any two sorting algorithms $\A,\A' \in \mathfrak{A}$ and any $\ell \in [n-2]$,
\begin{equation}
f_{\A}^{\sigma_{-a} \rightarrow \tau_{-a}}(\ell) = f_{\A'}^{\sigma_{-a} \rightarrow \tau_{-a}}(\ell).
\label{eqn:Temp1}
\end{equation}
\Cref{eqn:Claim1,eqn:Claim2,eqn:Temp1} together give us that $f_{\A}^{\sigma \rightarrow \tau}(\ell) = f_{\A'}^{\sigma \rightarrow \tau}(\ell)$ for all $\ell \in [n-1]$, as desired.

To prove the claims in \Cref{eqn:Claim1,eqn:Claim2}, recall from \Cref{subsec:Sorting_Algorithms} that a \emph{sorting algorithm} $\A$ is a sequence of steps $s_1, s_2, \dots$ such that every step corresponds to either selection or insertion sort, i.e., $s_j = \{\text{SEL}, \text{INS}\}$ for every $j$. We will prove the claims via case analysis based on whether $\A$ performs a selection sort operation during the first $k$ steps or not.\\

\textbf{Case I}: \emph{At least one of the first $k$ steps $s_1,\dots,s_k$ is selection sort}.\\

Let $1 \leq i \leq k$ be such that $s_i = \text{SEL}$ and $s_j = \text{INS}$ for all $1 \leq j < i$. In the first $(i-1)$ steps (which are all insertion sort operations), the algorithm $\A$ only considers the top $(i-1)$ alternatives in $\sigma$, namely $P_{i-1}(\sigma)$. Furthermore, since $i-1 < k$, we have that $a \notin P_{i-1}(\sigma)$. Therefore, the top $(i-1)$ alternatives in $\sigma$ are identical to those in $\sigma_{-a}$, and the execution of $\A$ during $\sigma \rightarrow \tau$ is identical to that during $\sigma_{-a} \rightarrow \tau_{-a}$ for the first $(i-1)$ steps. Stated differently, if $f_{\A}^{\sigma,i}(\ell)$ and $ f_{\A}^{\sigma_{-a},i}(\ell)$ denote the number of move-up-by-$\ell$-positions operations performed by $\A$ during the first $i$ steps for the input $\sigma$ and $\sigma_{-a}$ respectively, then 
$f_{\A}^{\sigma,i-1}(\ell) = f_{\A}^{\sigma_{-a},i-1}(\ell)$ for all $\ell \in [n-2]$ and $f_{\A}^{\sigma,i-1}(n-1) = 0$.

At the $i^\text{th}$ step, $\A$ performs a selection sort operation. This involves promoting the alternative $a$ by $(k-1)$ positions to the top of the current list. Therefore, at the end of the first $i$ steps, we have:

$\text{ If } k = n, \text{ then } $
\begin{equation}
f_{\A}^{\sigma,i}(\ell) =
  \begin{cases}
    f_{\A}^{\sigma_{-a},i-1}(\ell) & \text{ for all } \ell \in [n-2], \text{ and }\\
    1 & \text{ for } \ell = n-1;
  \end{cases}
\label{eqn:Temp2}
\end{equation}

$\text{ and if } k < n, \text{ then } $
\begin{equation}
f_{\A}^{\sigma,i}(\ell) =
  \begin{cases}
    f_{\A}^{\sigma_{-a},i-1}(\ell) & \text{ for all } \ell \in [n-2] \setminus \{k-1\},\\
    f_{\A}^{\sigma_{-a},i-1}(\ell) + 1 & \text{ for } \ell = k-1, \text{ and }\\
    0 & \text{ for } \ell = n-1.
  \end{cases}
\label{eqn:Temp3}
\end{equation}


Let $\sigma'$ denote the list maintained by $\A$ at the end of the $i^\text{th}$ step during $\sigma \rightarrow \tau$. In addition, let $\sigma''$ denote the list maintained by $\A$ at the end of the $(i-1)^\text{th}$ step during $\sigma_{-a} \rightarrow \tau_{-a}$. We therefore have that 
\begin{align}
    f_{\A}^{\sigma \rightarrow \tau}(\ell) = f_{\A}^{\sigma,i}(\ell) + f_{\A}^{\sigma' \rightarrow \tau}(\ell) \text{ for every } \ell \in [n-1],\text{ and } \nonumber \\ f_{\A}^{\sigma_{-a} \rightarrow \tau_{-a}}(\ell) = f_{\A}^{\sigma_{-a},i-1}(\ell) + f_{\A}^{\sigma'' \rightarrow \tau_{-a}}(\ell) \text{ for every } \ell \in [n-2].
\label{eqn:Temp34}
\end{align}

Observe that $\sigma' = (a,\sigma'')$. Consider the execution of $\A$ during $\sigma' \rightarrow \tau$ and during $\sigma'' \rightarrow \tau_{-a}$. From \Cref{lem:SameTopAlternative} (stated below), we have that 
\begin{align}
f_{\A}^{\sigma' \rightarrow \tau}(\ell) = f_{\A}^{\sigma'' \rightarrow \tau_{-a}}(\ell) \text{ for all } \ell \in [n-2] \text{ and } f_{\A}^{\sigma' \rightarrow \tau}(n-1) = 0.
\label{eqn:Temp4}
\end{align}

\Cref{eqn:Temp2,eqn:Temp3,eqn:Temp34,eqn:Temp4} together give the desired claim.\\

\textbf{Case II}: \emph{Each of the first $k$ steps is insertion sort, i.e., $s_1 = \textup{INS}, \dots, s_k = \textup{INS}$}.\\

The analysis in this case is identical to that of Case I for the first $(k-1)$ steps. That is, at the end of the first $(k-1)$ steps, $f_{\A}^{\sigma \rightarrow \tau}(\ell) = f_{\A}^{\sigma_{-a} \rightarrow \tau_{-a}}(\ell)$ for all $\ell \in [n-2]$ and $f_{\A}^{\sigma \rightarrow \tau}(n-1) = 0$. Note that alternative $a$ continues to be at the $k^\text{th}$ position in the current list at the end of the first $(k-1)$ steps.

At the $k^\text{th}$ step, $\A$ performs an insertion sort operation. Since $a$ is the most preferred alternative under $\tau$, this step once again involves promoting $a$ by $(k-1)$ positions to the top of the current list, i.e., the count function is modified exactly as in Case I. The rest of the analysis is identical to Case I as well. This finishes the proof of \Cref{lem:EquivalenceOfSortingAlgorithms}.
\end{proof}

\begin{restatable}{lemma}{SameTopAlternative}	
	Let $a \in A$ and $\sigma_{-a},\tau_{-a} \in \ml(A \setminus \{a\})$. Let $\sigma,\tau \in \ml(A)$ be such that $\sigma \coloneqq (a,\sigma_{-a})$ and $\tau \coloneqq (a,\tau_{-a})$. Then, for any sorting algorithm $\A \in \mathfrak{A}$, $f_{\A}^{\sigma \rightarrow \tau}(\ell) = f_{\A}^{\sigma_{-a} \rightarrow \tau_{-a}}(\ell)$ for all $\ell \in [m-2]$ and $f_{\A}^{\sigma \rightarrow \tau}(m-1) = 0$, where $|A| = m$.
	\label{lem:SameTopAlternative}
\end{restatable}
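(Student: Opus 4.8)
The plan is to exploit the fact that $a$ sits at the top of both $\sigma$ and $\tau$, so that any sorting algorithm ``wastes'' its first step placing $a$ (which is already in position) and then operates entirely within the remaining $m-1$ alternatives. Concretely, write $\A = (s_1, s_2, \dots)$ and observe that at step $1$ the selected alternative is $a$ regardless of whether $s_1 = \text{SEL}$ or $s_1 = \text{INS}$: for SEL it is the $\tau$-top of $S_0 = A$, namely $a$, and for INS it is the $\sigma$-top of $A$, again $a$ since $a = \sigma(1)$. As $a$ already occupies position $1$, this step moves $a$ by $0$ positions and records no operation, leaving the current list equal to $(a, \sigma_{-a})$ with $a$ frozen in the sorted prefix thereafter.

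First I would introduce the ``shifted'' algorithm $\A' \coloneqq (s_2, s_3, \dots)$ acting on the $(m-1)$-element instance $\sigma_{-a} \to \tau_{-a}$, and prove by induction on a step counter $t \ge 0$ the invariant that at the start of step $t+2$ of the run on $\sigma \to \tau$ and step $t+1$ of the run of $\A'$ on $\sigma_{-a} \to \tau_{-a}$, the two current lists are $(a, M_t)$ and $M_t$ for a common list $M_t$ over $A \setminus \{a\}$, with correspondingly sorted prefixes. The inductive step uses that the same rule $s_{t+2}$ is applied in both runs: the selected alternative coincides (for SEL it is the $\tau$-top of the common suffix, which equals its $\tau_{-a}$-top since $a$ lies in the prefix; for INS it is the top suffix element of the common list), and because $a$ occupies only the leading position, the selected alternative sits at position $p$ in the first run and $p-1$ in the second, and is promoted to position $q \ge 2$ and $q-1$ respectively. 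Hence both runs perform a move of exactly $p-q$ positions and end in lists $(a, M_{t+1})$ and $M_{t+1}$, preserving the invariant.

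Summing this per-step correspondence gives $f_{\A}^{\sigma \to \tau}(\ell) = f_{\A'}^{\sigma_{-a}\to\tau_{-a}}(\ell)$ for every $\ell$; since each matched move has length at most $m-2$ (the maximum possible on an $(m-1)$-element list) and the initial step contributes nothing, this also yields $f_{\A}^{\sigma \to \tau}(m-1) = 0$. To finish, I would pass from $\A'$ back to $\A$ on the small instance using \Cref{lem:EquivalenceOfSortingAlgorithms} for $m-1$ alternatives, which gives $f_{\A'}^{\sigma_{-a}\to\tau_{-a}}(\ell) = f_{\A}^{\sigma_{-a}\to\tau_{-a}}(\ell)$ and hence the claimed identity for all $\ell \in [m-2]$.

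The main obstacle is the bookkeeping in the inductive step: one must verify that the single leading occurrence of $a$ shifts both the source slot and the target slot of each promotion by exactly one, so that the move length is invariant, and that the SEL/INS selection rules are genuinely insensitive to $a$ precisely because it is confined to the prefix. A secondary point needing care is the index shift between $\A$ and $\A'$, which is exactly why the algorithm-independence of \Cref{lem:EquivalenceOfSortingAlgorithms} (available for $m-1$ alternatives under the induction on $m$ driving its own proof, so the reasoning is not circular) is invoked rather than a direct step-by-step identity for one fixed $\A$.
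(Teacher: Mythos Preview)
Your proposal is correct and rests on the same core observation as the paper's proof: the sorted-prefix property pins $a$ to the top slot throughout the run on $\sigma$, so the remaining $m-1$ positions evolve exactly like an independent sorting instance on $A\setminus\{a\}$, and no move can have length $m-1$. Where you differ from the paper is in handling the one-step index shift explicitly. You observe that step $k$ of $\A$ on $\sigma$ (for $k\ge 2$) matches step $k-1$ of the shifted algorithm $\A'=(s_2,s_3,\dots)$ on $\sigma_{-a}$, establish $f_{\A}^{\sigma\to\tau}=f_{\A'}^{\sigma_{-a}\to\tau_{-a}}$, and then invoke \Cref{lem:EquivalenceOfSortingAlgorithms} for $m-1$ alternatives to pass from $\A'$ back to $\A$. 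The paper, by contrast, simply asserts that the two executions ``can be mimicked'' without addressing why the \emph{same} algorithm $\A$ (as opposed to $\A'$) gives identical counts on the small instance; your version is the rigorous completion of that sentence. Your remark about circularity is also on point: since \Cref{lem:SameTopAlternative} for $m$ is only ever invoked inside the inductive step of \Cref{lem:EquivalenceOfSortingAlgorithms} at size $m$, the hypothesis of \Cref{lem:EquivalenceOfSortingAlgorithms} at size $m-1$ is already available, so the joint induction is sound. The paper's terser argument buys brevity; yours buys a clean justification of the step the paper leaves implicit.
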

\begin{proof}
We will first argue that $f_{\A}^{\sigma \rightarrow \tau}(m-1) = 0$. Suppose, for contradiction, that $f_{\A}^{\sigma \rightarrow \tau}(m-1) > 0$, that is, some alternative (say, $b$) is promoted by $(m-1)$ positions during the execution of $\A$. Since both selection and insertion sort maintain the sorted prefix property at every time step, it must be that $b \>_{\tau} a$, which is a contradiction since $a$ is the most preferred alternative under $\tau$.

Next, we will argue that $f_{\A}^{\sigma \rightarrow \tau}(\ell) = f_{\A}^{\sigma_{-a} \rightarrow \tau_{-a}}(\ell)$ for all $\ell \in [m-2]$. Once again, by the sorted prefix property, no alternative is promoted above $a$ at any time step during $\sigma \rightarrow \tau$. Since the top position remains fixed, the execution of $\A$ during $\sigma_{-a} \rightarrow \tau_{-a}$ can be mimicked to obtain the execution of $\A$ during $\sigma \rightarrow \tau$. The lemma now follows.
\end{proof}

\subsection{Proof of Theorem~\ref{thm:LinearWeightKTdist}}
\label{subsec:Proof_LinearWeightKTdist}

\LinearWeightKTdist*
\begin{proof}
	For the linear weight function $\wlin{}$, we have $\cost_{\wlin}(\sigma,\tau) = \sum_{\ell = 1}^{m-1} f^{\sigma \rightarrow \tau}(\ell) \cdot \ell$. Regardless of the choice of the sorting algorithm, any fixed pair of alternatives is swapped at most once during the transformation from $\sigma$ to $\tau$. As a result, each ``move up by $\ell$ slots'' operation, which contributes $\ell$ units to the $\cost$ function, also contributes $\ell$ units to the Kendall's Tau distance, giving us the desired claim. 
	
	For the affine weight function $\waff{}$, we therefore have 
	\begin{align*}
	\cost_{\waff}(\sigma,\tau)
	& = c \cdot \cost_{\wlin}(\sigma,\tau) + d \cdot \sum_{\ell = 1}^{m-1} f^{\sigma \rightarrow \tau}(\ell).\\
	& = c \cdot \dkt{}(\sigma,\tau) + d \cdot \#\textup{moves},
	\end{align*}
	as desired.
\end{proof}

\subsection{Proof of Theorem~\ref{thm:ExactAlgorithms}}
\label{subsec:Proof_ExactAlgorithms}

\ExactAlgorithms*
\begin{proof}
	(a) \emph{When $\D$ is a $k$-mixture Plackett-Luce $(\kPL)$ with $k = 1$}\\
	
	The expected time for any $\sigma \in \ml(A)$ under the \PL{} model with the parameter $\mytheta$ is given by
	\begin{align}
	\E_{\tau \sim \mytheta} [\cost_{\wlin}(\sigma,\tau)] \nonumber & = \E_{\tau \sim \mytheta} [\dkt(\sigma,\tau)] &  \text{(\Cref{thm:LinearWeightKTdist})} \nonumber \\
	& = \E_{\tau \sim \mytheta} \left[ \sum_{a_i,a_j \in A \, : \, a_i \>_{\sigma} a_j} \Ind[a_j \succ_{\tau} a_i] \right]  &  \text{(\Cref{defn:KendallTau})} \nonumber \\
	& = \sum_{a_i,a_j \in A \, : \, a_i \>_{\sigma} a_j} \E_{\tau \sim \mytheta} \left[ \Ind[a_j \succ_{\tau} a_i] \right] &  \text{(Linearity of Expectation)} \nonumber \\
	& = \sum_{a_i,a_j \in A \, : \, a_i \>_{\sigma} a_j} \Pr_{\tau \sim \mytheta} ( a_j \succ_{\tau} a_i ) &  \nonumber \\
	& = \sum_{a_i,a_j \in A \, : \, a_i \>_{\sigma} a_j} \frac{\theta_j}{\theta_i + \theta_j} & \text{(\Cref{defn:PL})}.
	\label{eqn:PL-KT-polytime-temp1}
	\end{align}
	Let $\sigma^* \in \ml(A)$ be a linear order that is consistent with the parameter $\mytheta$. That is, for any $a_i,a_j \in A$, $a_i \>_{\sigma^*} a_j$ if and only if either $\theta_i > \theta_j$ or $i < j$ in case $\theta_i = \theta_j$. We will show via an exchange argument that for any $\sigma \in \ml(A)$, $\E_{\tau \sim \mytheta} [\cost_{\wlin}(\sigma^*,\tau)] \leq \E_{\tau \sim \mytheta} [\cost_{\wlin}(\sigma,\tau)]$. The desired implication will then follow by simply computing $\sigma^*$, which can be done in polynomial time.
	
	Consider a pair of alternatives $a_i,a_j \in A$ that are adjacent in $\sigma$ such that $a_i \>_{\sigma^*} a_j$ and $a_j \>_{\sigma} a_i$ (such a pair must exist as long as $\sigma \neq \sigma^*$).\footnote{Note that $a_i,a_j$ need not be adjacent in $\sigma^*$.} Let $\sigma' \in \ml(A)$ be derived from $\sigma$ by swapping $a_i$ and $a_j$ (and making no other changes). Then, from \Cref{eqn:PL-KT-polytime-temp1}, we have that
	\begin{align*}
	\E_{\tau \sim \mytheta} [\cost_{\wlin}(\sigma',\tau)] - \E_{\tau \sim \mytheta} & [\cost_{\wlin}(\sigma,\tau)] = \frac{\theta_j}{\theta_i + \theta_j} - \frac{\theta_i}{\theta_i + \theta_j} \leq 0,
	\end{align*}
	where the inequality holds because $\sigma^*$ is consistent with $\mytheta$ and $a_i \>_{\sigma^*} a_j$. By repeated use of the above argument---with $\sigma'$ taking the role of $\sigma$, and so on---we get the desired claim.\\
	
	(b) \emph{When $\D$ is a $k$-mixture Mallows model $(\kMM{})$ with $k = 1$}\\
	
	The proof is similar to case (a). Once again, we let $\sigma$ and $\sigma'$ be two linear orders that are identical except for the pair $a_i,a_j \in A$ that are adjacent in $\sigma$ such that $a_i \>_{\sigma^*} a_j$, $a_j \>_{\sigma} a_i$, and $a_i \>_{\sigma'} a_j$; here $\sigma^*$ is the reference ranking for the Mallows model. Then,
	\begin{align*}
	\E_{\tau \sim (\sigma^*,\phi)} [\cost_{\wlin}(\sigma',\tau)] - & \E_{\tau \sim (\sigma^*,\phi)} [\cost_{\wlin}(\sigma,\tau)]& \\
	& = \E_{\tau \sim (\sigma^*,\phi)} [\dkt(\sigma',\tau)] - \E_{\tau \sim (\sigma^*,\phi)} [\dkt(\sigma,\tau)] & \text{(by \Cref{thm:LinearWeightKTdist})}\\
	& = \Pr_{\tau \sim (\sigma^*,\phi)} ( a_j \succ_{\tau} a_i ) - \Pr_{\tau \sim (\sigma^*,\phi)} ( a_i \succ_{\tau} a_j ) &\\
	& = 2 \left( \frac{1}{2} - \Pr_{\tau \sim (\sigma^*,\phi)} ( a_i \succ_{\tau} a_j ) \right) &\\
	& = 2 \left( \frac{1}{2} - \frac{\sum_{z=1}^{\Delta} z \phi^{z-1}}{\left( \sum_{z=0}^{\Delta-1} \phi^{z} \right) \left( \sum_{z=0}^{\Delta} \phi^{z} \right)}  \right) & \text{(by \Cref{prop:MM-pairwise})},
	\end{align*}
	where $\Delta = \rank(\sigma^*,a_j) - \rank(\sigma^*,a_i)$. It is easy to verify that 
	$g(\Delta) \coloneqq \frac{\sum_{z=1}^{\Delta} z \phi^{z-1}}{\left( \sum_{z=0}^{\Delta-1} \phi^{z} \right) \left( \sum_{z=0}^{\Delta} \phi^{z} \right)} \geq \frac{1}{2}$
	for all integral $\Delta \geq 1$ whenever $\phi \in [0,1]$. This implies that
	\begin{align*}
	    \E_{\tau \sim (\sigma^*,\phi)} [\cost_{\wlin}(\sigma',\tau)] \leq \E_{\tau \sim (\sigma^*,\phi)} [\cost_{\wlin}(\sigma,\tau)].
	\end{align*}
	
	Repeated application of the above argument shows that for any linear order $\sigma \in \ml(A)$,
	\begin{align*}
	    \E_{\tau \sim (\sigma^*,\phi)} [\cost_{\wlin}(\sigma^*,\tau)] \leq \E_{\tau \sim (\sigma^*,\phi)} [\cost_{\wlin}(\sigma,\tau)].
	\end{align*}
	
	The desired implication follows by simply returning the reference ranking $\sigma^*$ as the output.\\
	
	(c) \emph{When $\D$ is a uniform distribution with support size $n \leq 2$}\\
	
	Let $\D$ be a uniform distribution over the set of $n$ linear orders $\{\sigma^{(i)}\}_{i=1}^n$. From \Cref{thm:LinearWeightKTdist}, we know that for any $\sigma \in \ml(A)$, we have $\E_{\tau \sim \D} [\cost_{\wlin}(\sigma,\tau)] =\sum_{i=1}^n \dkt(\sigma,\sigma^{(i)})$. When $n=1$, it is clear that $\sigma = \sigma^{(1)}$ is the unique minimizer of the expected cost. When $n=2$, it can be argued that $\sigma \in \{\sigma^{(1)},\sigma^{(2)}\}$ is the desired solution. Indeed, let $S \coloneqq \{(a_i,a_j) \in A \times A : a_i \>_{\sigma^{(1)}} a_j \text{ and } a_j \>_{\sigma^{(2)}} a_i\}$ be the set of (ordered) pairs of alternatives over which $\sigma^{(1)}$ and $\sigma^{(2)}$ disagree. Any linear order $\sigma \notin \{\sigma^{(1)},\sigma^{(2)}\}$ contributes at least $|S|$ to the expected time in addition to the number of pairs over which $\sigma$ differs from $\sigma^{(1)}$ or $\sigma^{(2)}$. Hence, the expected time is minimized when $\sigma \in \{\sigma^{(1)},\sigma^{(2)}\}$.
\end{proof}

\subsection{Proof of Theorem~\ref{thm:HardnessResults}}
\label{subsec:Proof_HardnessResults}

\HardnessResults*
\begin{proof}
	(a) \emph{When $\D$ is $k$-mixture Plackett-Luce model $(\kPL{})$ for $k=4$}\\
	
	Let $\D$ be a $k$-mixture Plackett-Luce model with the parameters $\{\gamma^{(\ell)},\mytheta^{(\ell)}\}_{\ell=1}^k$, and let $\sigma \in \ml(A)$. By an argument similar to that in the proof of \Cref{thm:ExactAlgorithms}, we have that
	\begin{align}
	\E_{\tau \sim \kPL} & [\cost_{\wlin}(\sigma,\tau)] = \sum_{a_i,a_j \in A \, : \, a_i \>_{\sigma} a_j} \sum_{\ell = 1}^k \gamma^{(\ell)} \cdot \frac{\theta^{(\ell)}_j}{\theta^{(\ell)}_i + \theta^{(\ell)}_j},
	\label{eqn:kPL_NPC_temp1}
	\end{align}
	which can be computed in polynomial time. Hence the problem is in \NP{}.
	
	To prove \NPH{}ness{}, we will show a reduction from a restricted version of \Kem{} for four agents, which is known to be \NPC{} \citep{Dwork01:Rank}. Given any instance of \Kem{} with the preference profile $\{\sigma^{(\ell)}\}_{\ell=1}^n$ where $n=4$, the parameters of $\D$ are set up as follows: The number of mixtures is set to $k = n = 4$. For each $\ell \in [n]$, $\gamma^{(\ell)} = \frac{1}{n}$, and for each $a_i \in A$, $\theta_i^{(\ell)} = m^{4(m - \rank(\sigma^{(\ell)},a_i))}$. Thus, for instance, if $\sigma^{(1)} = (a_1 \> a_2 \> \dots a_m)$, then $\theta^{(1)}_1 = m^{4(m-1)}, \theta^{(1)}_2 = m^{4(m-2)}, \dots, \theta^{(1)}_m = 1$. Notice that despite being exponential in $m$, the parameters $\{\theta_i^{(\ell)}\}_{\ell \in [n], i \in [m]}$ can each be specified in $\poly(m)$ number of bits, and are therefore polynomial in the input size.
	
	We will now argue that a linear order $\sigma \in \ml(A)$ satisfies  $\sum_{i=1}^n \dkt(\sigma,\sigma^{(i)}) \leq \delta$ if and only if $\E_{\tau \sim \kPL} [\cost_{\wlin}(\sigma,\tau)] \leq \delta + 0.5$. First, suppose that $\sigma$ satisfies  $\sum_{i=1}^n \dkt(\sigma,\sigma^{(i)}) \leq \theta$. Define, for each $\ell \in [n]$, $S_\ell \coloneqq \{ (a_i,a_j) \in A \times A : a_i \>_\sigma a_j \text{ and } a_j \>_{\sigma^{(\ell)}} a_i \}$. Thus, $\sum_{\ell = 1}^n |S_\ell| \leq \delta$. Then, from \Cref{eqn:kPL_NPC_temp1}, we have that
	\begin{align*}
	\E_{\tau \sim \kPL} [\cost_{\wlin}(\sigma,\tau)] & = \sum_{a_i,a_j \in A \, : \, a_i \>_{\sigma} a_j} \sum_{\ell = 1}^n \frac{1}{n} \cdot \frac{\theta^{(\ell)}_j}{\theta^{(\ell)}_i + \theta^{(\ell)}_j}\\
	& = \frac{1}{n} \sum_{\ell = 1}^n \left( \sum_{(a_i,a_j) \in S_\ell} \frac{\theta^{(\ell)}_j}{\theta^{(\ell)}_i + \theta^{(\ell)}_j} + \sum_{(a_i,a_j) \notin S_\ell} \frac{\theta^{(\ell)}_j}{\theta^{(\ell)}_i + \theta^{(\ell)}_j} \right)\\
	& \leq \frac{1}{n} \sum_{\ell = 1}^n \left( \sum_{(a_i,a_j) \in S_\ell} 1 + \sum_{(a_i,a_j) \notin S_\ell} \frac{1}{1 + m^4} \right)\\
	& \leq \frac{1}{n} \sum_{\ell = 1}^n \left( \delta + \binom{m}{2} \frac{1}{1 + m^4} \right)\\
	& = \delta + \binom{m}{2} \frac{1}{1 + m^4}\\
	& \leq \delta + 0.5.
	\end{align*}
	The first inequality follows from the choice of parameters in our construction. Indeed, for any $(a_i,a_j) \in S_\ell$, we have $\theta^{(\ell)}_i, \theta^{(\ell)}_j \geq 1$ and therefore $\frac{\theta^{(\ell)}_j}{\theta^{(\ell)}_i + \theta^{(\ell)}_j} \leq 1$. In addition, for any $(a_i,a_j) \notin S_\ell$, we have that $\rank(\sigma^{(\ell)},a_i) < \rank(\sigma^{(\ell)},a_j)$, and therefore
	\begin{align*}
	\frac{\theta^{(\ell)}_j}{\theta^{(\ell)}_i + \theta^{(\ell)}_j} =  \frac{ 1 }{1 + m^{4 \left( \rank(\sigma^{(\ell)},a_j) - \rank(\sigma^{(\ell)},a_i) \right) } } \leq \frac{1}{1+m^4}.
	\end{align*}
	
	The second inequality uses the fact that $\sum_{\ell = 1}^n |S_\ell| \leq \delta$. The final inequality holds because $m^4 + 1 - 2 \binom{m}{2} > 0$ for all $m \geq 1$.
	
	Now suppose that $\sigma$ satisfies $\E_{\tau \sim \kPL} [\cost_{\wlin}(\sigma,\tau)] \leq \delta + 0.5$. We will argue that $\sum_{i=1}^n \dkt(\sigma,\sigma^{(i)})$ must be strictly smaller than $\delta + 1$, which, by integrality, will give us the desired claim. Suppose, for contradiction, that $\sum_{i=1}^n \dkt(\sigma,\sigma^{(i)}) \geq \delta + 1$. Thus, $\sum_{\ell = 1}^n |S_\ell| \geq \delta + 1$. We can use this relation to construct a lower bound on the expected time, as follows:
	\begin{align*}
	\E_{\tau \sim \kPL}[\cost_{\wlin}(\sigma,\tau)] & = \frac{1}{n} \sum_{\ell = 1}^n \left( \sum_{(a_i,a_j) \in S_\ell} \frac{\theta^{(\ell)}_j}{\theta^{(\ell)}_i + \theta^{(\ell)}_j} + \sum_{(a_i,a_j) \notin S_\ell} \frac{\theta^{(\ell)}_j}{\theta^{(\ell)}_i + \theta^{(\ell)}_j}\right) \\
	& \geq (\delta + 1) \frac{m^4}{m^4 + 1}\\
	& > \delta + 0.5.
	\end{align*}
	The first inequality holds because for any $(a_i,a_j) \notin S_\ell$, we have that $\frac{\theta^{(\ell)}_j}{\theta^{(\ell)}_i + \theta^{(\ell)}_j} \geq 0$, and for any $(a_i,a_j) \in S_\ell$, we have that
	\begin{align*}
	    \frac{\theta^{(\ell)}_j}{\theta^{(\ell)}_i + \theta^{(\ell)}_j} \geq \frac{m^{4 ( \rank(\sigma^{(\ell)},a_i) - \rank(\sigma^{(\ell)},a_j) ) }}{1+m^{4 ( \rank(\sigma^{(\ell)},a_i) - \rank(\sigma^{(\ell)},a_j) ) }} \geq \frac{m^4}{1+m^4}.
	\end{align*}

	 The second inequality holds because $\frac{\delta+1}{1+m^4} < \frac{1}{2}$ for all $m \geq 2$ (note that we can assume $m \geq 2$ without loss of generality). The chain of inequalities give us the desired contradiction. Hence, it must be that $\sum_{i=1}^n \dkt(\sigma,\sigma^{(i)}) \leq \delta$. This finishes the proof of part (a) of \Cref{thm:HardnessResults}.\\

	(b) \emph{When $\D$ is $k$-mixture Mallows model $(\kMM{})$ for $k=4$}\\
	
	Let $\D$ be a $k$-mixture Mallows model with the parameters $\{\gamma^{(\ell)},\sigma^*_{(\ell)},\phi_{(\ell)}\}_{\ell=1}^k$, and let $\sigma \in \ml(A)$. By an argument similar to that in the proof of \Cref{thm:ExactAlgorithms}, we have that
	\begin{align*}
	    & \E_{\tau \sim \kMM} [\cost_{\wlin}(\sigma,\tau)] = \sum_{a_i,a_j \in A \, : \, a_i \>_{\sigma} a_j} \sum_{\ell=1}^k \gamma^{(\ell)} \cdot \Pr_{\tau \sim (\sigma^*_{(\ell)},\phi_{(\ell)})} ( a_j \succ_{\tau} a_i ),
	\end{align*}
	which can be computed in polynomial time (\Cref{eqn:PairwiseMarginalkMM}). Therefore, the problem is in \NP{}.
	
	To prove \NPH{}ness, we will show a reduction from \Kem{}. Given any instance of \Kem{} with the preference profile $\{\sigma^{(\ell)}\}_{\ell=1}^n$, the parameters of $\D$ are set up as follows: The number of mixtures $k$ is set to $n$. For each $\ell \in [n]$, $\gamma_{(\ell)} = \frac{1}{n}$, $\sigma^*_{(\ell)} = \sigma^{(\ell)}$, and $\phi_{(\ell)} = 0$. The expected time for any linear order $\sigma$ is simply its average Kendall's Tau distance from the profile $\{\sigma^{(\ell)}\}_{\ell=1}^n$, hence the equivalence of the solutions follows. Finally, since \Kem{} is known to be \NPC{} even for $n = 4$, a similar implication holds for \Rec{} when $k = 4$.\\
	
	(c) \emph{When $\D$ is a uniform distribution over $n=4$ linear orders}\\
	
	Membership in \NP{} follows from \Cref{thm:LinearWeightKTdist}, since for the linear weight function, the expected time for any linear order $\sigma \in \ml(A)$ is equal to its average Kendall's Tau distance from the preference profile that supports $\D$, which can be computed in polynomial time. In addition, \NPH{}ness follows from a straightforward reduction from \Kem{}: Given any instance of \Kem{} with the preference profile $\{\sigma^{(i)}\}_{i=1}^n$, the distribution $\D$ in \Rec{} is simply a uniform distribution over $\{\sigma^{(i)}\}_{i=1}^n$. The equivalence of the solutions follows once again from \Cref{thm:LinearWeightKTdist}. Finally, since \Kem{} is known to be \NPC{} even for $n = 4$, a similar implication holds for \Rec{} as well.
\end{proof}

\subsection{Proof of Theorem~\ref{thm:RecPTAS}}
\label{subsec:Proof_RecPTAS}

\RecPTAS*
\begin{proof}
	We will show that for each of the three settings in \Cref{thm:RecPTAS}, \Rec{} turns out to be a special case of \wfast{}, and therefore the \PTAS{} of \Cref{prop:KendallTauPTAS} from \Cref{subsec:Computational_Problems} applies.\\
	
	(a) \emph{When $\D$ is $k$-mixture Plackett-Luce model $(\kPL{})$}\\
	
	Recall from \Cref{thm:LinearWeightKTdist} that when the weight function is linear, the expected cost of $\sigma \in \ml(A)$ is given by $\E_{\tau \sim \D} [\cost_{\wlin}(\sigma,\tau)] = \E_{\tau \sim \D} [\dkt(\sigma,\tau)]$. When $\D$ is a $k$-mixture Plackett-Luce model (\kPL{}) with the parameters $\{\gamma^{(\ell)},\theta^{(\ell)}\}_{\ell = 1}^k$, the expected cost of $\sigma$ under $\D$ is given by (refer to \Cref{eqn:PL-KT-polytime-temp1} in the proof of \Cref{thm:ExactAlgorithms}):
	\begin{align*}
	    \E_{\tau \sim \kPL}& [\cost_{\wlin}(\sigma,\tau)] = \sum_{a_i,a_j \in A \, : \, a_i \>_{\sigma} a_j} \sum_{\ell = 1}^k \gamma^{(\ell)} \cdot \frac{\theta^{(\ell)}_j}{\theta^{(\ell)}_i + \theta^{(\ell)}_j}.
	\end{align*}
	
	Consider a complete, directed, and weighted graph $G = (A,E)$ defined over the set of alternatives, where for every pair of alternatives $a_i,a_j$, we have $(a_i,a_j) \in E$ if and only if either $\theta_i > \theta_j$ or $i < j$ in case $\theta_i = \theta_j$. Each edge $(a_i,a_j) \in E$ is associated with a pair of weights $w_{i,j} = \sum_{\ell = 1}^k \gamma^{(\ell)} \cdot \frac{\theta^{(\ell)}_i}{\theta^{(\ell)}_i + \theta^{(\ell)}_j}$ and $w_{j,i} = \sum_{\ell = 1}^k \gamma^{(\ell)} \cdot \frac{\theta^{(\ell)}_j}{\theta^{(\ell)}_i + \theta^{(\ell)}_j}$. Notice that $w_{i,j} + w_{j,i} = 1$ for every $(a_i,a_j) \in E$. Furthermore, the expected cost of $\sigma$ can be expressed in terms of the edge-weights as follows:
	$$\E_{\tau \sim \kPL} [\cost_{\wlin}(\sigma,\tau)] = \sum_{a_i,a_j \in A} w_{j,i} \cdot \Ind[a_i \>_{\sigma} a_j].$$
	Therefore, $\sigma$ is a solution of \Rec{} if and only if it is a solution of \wfast{} for the graph $G$ constructed above (with $b = 1$).\\
	
	(b) \emph{When $\D$ is $k$-mixture Mallows model $(\kMM{})$}\\
	
	An analogous argument works for the case when $\D$ is a $k$-mixture Mallows model (\kMM{}) with the parameters $\{\gamma^{(\ell)},\sigma^*_{(\ell)},\phi_{(\ell)}\}_{\ell=1}^k$. In this case, we set the weights to be $w_{i,j} = \sum_{\ell=1}^{k} \gamma^{(\ell)} \cdot g_{\ell}(\Delta^{i,j}_\ell) \text{ and } w_{j,i} = \sum_{\ell=1}^{k} \gamma^{(\ell)} \cdot \left( 1 -  g_{\ell}(\Delta^{i,j}_\ell) \right)$, where $\Delta^{i,j}_{\ell}$ and $g_{\ell}(\cdot)$ are as defined in \Cref{eqn:PairwiseMarginalkMM} in \Cref{subsec:Appendix_Preliminaries}.\\
	
	(c) \emph{When $\D$ is a uniform distribution}\\
	
	Finally, when $\D$ is a uniform distribution over $\{\sigma^{(\ell)}\}_{\ell=1}^n$, an analogous argument works for $w_{i,j} = \sum_{\ell=1}^{n} \frac{1}{n} \cdot \Ind[a_i \>_{\sigma^{(\ell)}} a_j]$ and $w_{j,i} = \sum_{\ell=1}^{n} \frac{1}{n} \cdot \Ind[a_j \>_{\sigma^{(\ell)}} a_i]$.
\end{proof}

\subsection{Proof of Theorem~\ref{thm:BordaApproxAlgo}}
\label{subsec:Proof_BordaApproxAlgo}

\BordaApproxAlgo*
\begin{proof} (Sketch) The proof is similar to that of \Cref{thm:RecPTAS} in \Cref{subsec:Proof_RecPTAS}. The only difference is that we use the algorithm in \Cref{prop:BordaApprox} (from \Cref{subsec:Computational_Problems}) instead of \Cref{prop:KendallTauPTAS} as a subroutine. Notice that the condition $w_{i,j} + w_{j,i} = 1$ is satisfied for every $(a_i,a_j) \in E$, and thus \Cref{prop:BordaApprox} is applicable.
\end{proof}

\subsection{Proof of Theorem~\ref{thm:ApproxGeneralWeights}}
\label{subsec:Proof_ApproxGeneralWeights}

\ApproxGeneralWeights*
\begin{proof}
	We will show that the linear order $\sigma$ constructed in \Cref{thm:RecPTAS} provides the desired approximation guarantee. Let $\sigma^\lin \in \arg\min_{\sigma \in \ml(A)} \E_{\tau \sim \D} [\cost_{\wlin}(\sigma,\tau)]$. Then,
	\begin{align*}
	\E_{\tau \sim \D} [\cost_{\w}(\sigma,\tau)] & \leq \alpha \E_{\tau \sim \D} [\cost_{\wlin}(\sigma,\tau)] & \quad \text{(by closeness-of-weights)} \\
	& \leq \alpha (1+\varepsilon) \E_{\tau \sim \D} [\cost_{\wlin}(\sigma^\lin,\tau)] & \quad \text{(\Cref{thm:RecPTAS})} \\
	& \leq \alpha (1+\varepsilon) \E_{\tau \sim \D} [\cost_{\wlin}(\sigma^*,\tau)] & \quad \text{(optimality of $\sigma^\lin$)} \\
	& \leq \alpha \beta (1+\varepsilon) \E_{\tau \sim \D} [\cost_{\w}(\sigma^*,\tau)] & \quad \text{(by closeness-of-weights)}.
	\end{align*}
\end{proof}

\end{document}